\crefname{lemma}{Lem.}{Lem.}
\crefname{example}{Exmp.}{Exmp.}
\crefname{section}{Sect.}{Sect.}
\crefname{appendix}{Appx.}{Appx.}
\crefname{definition}{Def.}{Def.}
\crefname{theorem}{Thm.}{Thm.}
\crefname{proposition}{Prop.}{Prop.}
\crefname{corollary}{Cor.}{Cor.}
\crefname{problem}{Problem}{Problem}
\crefname{algorithm}{Alg.}{Alg.}
\crefname{equation}{Eq.}{Eq.}
\crefname{enumi}{}{}
\newcommand{\oomit}[1]{ }
\newcommand{\RR}{\mathbb{R}}
\newcommand{\NN}{\mathbb{N}}
\newcommand{\TT}{\mathcal{T}}
\newcommand{\xx}{\mathbf{x}}
\newcommand{\yy}{\mathbf{y}}
\newcommand{\zz}{\mathbf{z}}
\def\ba{{\boldsymbol{\alpha}}}
\def\cl{\hbox{\rm{cl}}}
\newtheorem{proper}{Property}
\newtheorem*{problem*}{Problem}
\def\thanks#1{\protected@xdef\@thanks{\@thanks
        \protect\footnotetext{#1}}}
\begin{document}

\title{Nonlinear Craig Interpolant Generation over Unbounded Domains by Separating Semialgebraic Sets}
\thanks{The first two authors marked with $\star$ contributed equally to this work and should be considered co-first authors. This work has been partially funded   by the National Key R\&D Program of China under grant No.\ 2022YFA1005101 and 2022YFA1005102, by the NSFC under grant No.\ 62192732, 62032024, and 12201618, by the CAS Project for Young Scientists in Basic Research under grant No.\ YSBR-040, by the Key R\&D Program of Hubei Province (2023BAB170), and by the Fundamental Research Funds for the Central Universities 
}


\author{
Hao Wu$^\star$
\inst{1} \orcidlink{0000-0001-9368-4744} \and
Jie Wang$^\star$\inst{2} \orcidlink{0000-0002-9681-1451} \and
Bican Xia \inst{3} \orcidlink{0000-0002-2570-2338} \and
Xiakun Li \inst{3} \orcidlink{0009-0007-1663-1287} \and
Naijun Zhan \inst{4,1} \orcidlink{0000-0003-3298-3817} \and
Ting Gan \inst{5} \inst{(}\Envelope\inst{)} \orcidlink{0000-0002-4880-5129}
}

\institute{
Institute of Software, Chinese Academy of Sciences, University of CAS, China \and 
Academy of Mathematics and Systems Science, Chinese Academy of Sciences\and 
School of Mathematical Sciences, Peking University, Beijing, China \and
School of Computer Science, Peking University, Beijing, China \and
School of Computer Science, Wuhan University, China
}

\maketitle

\begin{abstract}
Interpolation-based techniques become popular in recent years, as they can improve the scalability of existing verification techniques due to their inherent modularity and local reasoning capabilities. 
Synthesizing Craig interpolants is the cornerstone of these techniques. 
In this paper, we investigate nonlinear Craig interpolant synthesis for two polynomial formulas of the general form, essentially corresponding to the underlying mathematical problem to separate two disjoint semialgebraic sets.
By combining the homogenization approach with existing techniques, we prove the existence of a novel class of non-polynomial interpolants called semialgebraic interpolants. 
These semialgebraic interpolants subsume polynomial interpolants as a special case. 
To the best of our knowledge, this is the first existence result of this kind.
Furthermore, we provide complete sum-of-squares characterizations for both polynomial and semialgebraic interpolants, which can be efficiently solved as semidefinite programs.
Examples are provided to demonstrate the effectiveness and efficiency of our approach. 

\keywords{Craig interpolation \and Separating semialgebraic sets \and Homogenization \and Sum-of-squares \and Semidefinite programming}
\end{abstract}

\section{Introduction} \label{sec:intr}

\paragraph{Background.}
Craig interpolant is a fundamental concept in formal verification and automated theorem proving. 
It was introduced by William Craig in the 1950s as a tool for reasoning about logical formulas and their satisfiability. 
Craig interpolation techniques possess excellent modularity and local reasoning capabilities, 
making them effective tools for enhancing the scalability of formal verification methods, like theorem proving \cite{krajicek97,pudlak97}, model-checking \cite{mcmillan03}, abstract interpretation \cite{HJMM04,mcmillan05}, program verification \cite{wang11,LSXLSH2017} and so on. 


Efficient generation of Craig interpolants is crucial in interpolation-based techniques, 
and therefore has garnered increasing attention.
Formally, a formula $I$ is called a Craig interpolant for two mutually exclusive formulae $\phi$ and $\psi$ in a background theory $\TT$, if it is defined on the common symbols of $\phi$ and $\psi$, implied by $\phi$ in the theory $\TT$, and inconsistent with $\psi$ in the theory $\TT$.
Due to the diversity of background theories and their integration, researchers have been dedicated to developing efficient interpolation synthesis algorithms.
Currently, numerous effective algorithms for automatic synthesis of interpolants have been proposed for various fragments of first-order logic, e.g., linear arithmetic \cite{HJMM04}, logic with arrays \cite{hoenicke18ijcar,mcmillan08}, logic with sets \cite{KMZ06}, equality logic with uninterpreted functions (EUF) \cite{mcmillan05,CGS08}, etc., and their combinations \cite{YM05,KV09,RS10}.
Moreover, D'Silva et al. \cite{SPWK10} explored how to compare the strength of various interpolants.   

However, interpolant generation for nonlinear arithmetic and its combination with the aforementioned theories is still in infancy, although nonlinear polynomial inequalities are quite common in software involving number theoretic functions as well as hybrid systems \cite{Zhan17,ZZKL12}. 
In addition, when the formulas $\phi$ and $\psi$ are defined by polynomial inequalities, generating an interpolant is essentially equivalent to the mathematical problem of separating two disjoint semialgebraic sets, which has a long history and is a challenging mathematical problem \cite{acquistapace1999separation}.


In~\cite{DXZ13}, Dai et al. attempted to generate interpolants for conjunctions of mutually contradictory nonlinear polynomial inequalities without unshared variables. 
They proposed an algorithm based on Stengle's Positivstellensatz \cite{Stengle}, which guarantees the existence of a witness and can be computed using semidefinite programming (SDP). 
While their algorithm is generally incomplete, it becomes complete when all variables are bounded, known as the Archimedean condition (see in \cref{ssec:QM}). 

In \cite{GDX16}, Gan et al. introduced an algorithm for generating interpolants specifically for quadratic polynomial inequalities. 
Their approach is based on the insight that analyzing the solution space of concave quadratic polynomial inequalities can be achieved by linearizing them, using a generalization of Motzkin's transposition theorem. 
Additionally, they discussed generating interpolants for a combination of the theory of quadratic concave polynomial inequalities and EUF using a hierarchical calculus proposed in \cite{SSLMCS2008} and employed in \cite{RS10}. 

In \cite{gan2020nonlinear}, Gan et al. further extended the problem from the case of quadratic concave inequalities to the more general Archimedean case. 
To accomplish this, they utilized Putinar's Positivstellensatz and proposed a Craig interpolation generation method based on SDP. 
This method allows to generate interpolants in a broader class of situations involving nonlinear polynomial inequalities. 
However, the Archimedean condition still imposes a limitation on the method, as it requires bounded domains.

In \cite{chen19cade}, Chen et al. proposed a counterexample-guided framework based on support vector machines for synthesizing nonlinear interpolants.
Later in \cite{lin2022}, Lin et al. combined this framework and deep learning for synthesizing neural interpolants.
In \cite{jovanovic21cav}, Jovanović and Dutertre also designed a counterexample-guided framework based on cylindrical algebraic decomposition (CAD) for synthesizing interpolants as boolean combinations of constraints.
However, these approaches rely on quantifier elimination to ensure completeness and convergence, which terribly affects their efficiency due to its doubly exponential time complexity~\cite{dh88}.

For theories including non-polynomial expressions, the general idea is to abstract non-polynomial expressions into polynomial or linear expressions.
In \cite{GZ16}, Gao and Zufferey presented an approach for extracting interpolants for nonlinear formulas that may contain transcendental functions and differential equations. 
They accomplished this by transforming proof traces from a $\delta$-decision procedure~\cite{Gao14} based on interval constraint propagation (ICP)~\cite{BG06}. 
Like the Archimedean condition, $\delta$-decidability also imposes the restriction that variables are bounded (in a hyper-rectangle).
A similar idea was also reported in \cite{KB11}.
In \cite{srikanth2017} and \cite{cimatti18tcl}, Srikanth et al. and Cimatti et al. proposed approaches to abstract nonlinear formulas into the theory of linear arithmetic with uninterpreted functions.

\paragraph{Contributions.} 
In this paper, we consider how to synthesize an interpolant function $h(\xx)$ for two polynomial formulas $\phi(\xx,\yy)$ and $\psi(\xx,\zz)$ such that $\phi(\xx,\yy)\models h(\xx)>0$ and $\psi(\xx,\zz)\models h(\xx)<0$ without assuming the Archimedean condition, i.e., the variables in $\phi$ and $\psi$ can have an unbounded range of values.
Here, uncommon variables of $\phi$ and $\psi$ are allowed, and the description of formulas may involve any polynomial of any degree. Hence the problem is more general than the ones discussed in \cite{DXZ13,GDX16,gan2020nonlinear}, and is also more difficult as polynomial interpolants may not exist \cite{acquistapace1999separation}.
To address this problem, we first utilize homogenization techniques to elevate the descriptions of $\phi$ and $\psi$ to the homogeneous space. 
In this homogeneous space, we can impose the constraint that the variables lie on a unit sphere, thus reviving the Archimedean condition. 
Combining this idea with the work in \cite{gan2020nonlinear}, we can prove the existence of a semialgebraic function $h(\xx)=h_1(\xx)+h_2(\xx)\sqrt{\|\xx\|^2+1}$ such that $h(\xx)>0$ serves as an interpolant, where $h_1,h_2$ are polynomials ($h$ becomes a polynomial when $h_2=0$).
Furthermore, we provide sum-of-squares (SOS) programming procedures for finding such semialgebraic interpolants as well as polynomial interpolants.
Under certain assumptions, we prove that the SOS procedures are sound and complete.

\paragraph{Organization.} 
The rest of the paper is organized as follows. 
Preliminaries are introduced in \cref{sec:prel}.
\cref{sec:existence} proves the existence of an interpolant for two mutually contradictory polynomial formulas.
\cref{sec:sos} derives an SOS characterization for the interpolant.
\cref{sec:5} presents an SDP-based method for computation and provides examples with portraits.
Finally, \cref{sec:con}, we conclude this paper and discuss some future works.
Omitted proofs are given in the \cref{app}.
\section{Preliminaries} \label{sec:prel}

We first fix some basic notations. 
Let $\mathbb{N}$ and $\mathbb{R}$ be the sets of integers and real numbers, respectively. 
By convention, we use boldface letters to denote vectors.
Fixing a vector of indeterminates $\xx\coloneqq(x_1,\ldots,x_r)$, let $\mathbb{R}[\xx]$ denote the polynomial ring in variables $\xx$ over real numbers. 
We use $\mathrm{\Sigma}[\xx]\coloneqq\{\sum_{i=1}^m p_i^2\mid p_i\in \mathbb{R}[\xx], m\in \NN\}$ to denote the set of SOS polynomials in variables $\xx$.  
A basic semialgebraic set $S\subseteq \mathbb R^r$ is of the form $\{\xx\in \mathbb{R}^r \mid p_1(\xx) \triangleright 0, \dots, p_m(\xx) \triangleright 0\}$, where $p_i(\xx) \in \mathbb{R}[\xx]$ and $\triangleright\in \{\ge, >\}$
(each of the inequalities can be either  
strict or non-strict).
A basic semialgebraic set is said to be \emph{closed} if it is defined by non-strict inequalities.
Semialgebraic sets are formed as unions of basic semialgebraic sets. i.e., $T=\bigcup_{i=1}^{n} S_i$ is a semialgebraic set, where each $S_i$ is a basic semialgebraic set.
For any (semialgebraic) set $S\subseteq \mathbb{R}^r$, let $\cl(S)$ denote the closure of $S$. 
Let $\bot$ and $\top$ stand for \textbf{false} and \textbf{true}, respectively.
For a vector $\xx\in\mathbb{R}^r$, let $\|\xx\|\coloneqq \sqrt{\sum_{i=1}^r x_i^2}$ denote the standard Euclidean norm. 

In the following, we give a brief introduction on important notions used throughout the rest of this paper and then describe the problem of interest.

\subsection{Quadratic Module} 
\label{ssec:QM}
\label{sn}

\begin{definition}[Quadratic Module \cite{marshall2008book}]
A subset $\mathcal{M}$ of $\mathbb{R}[\xx]$ is called a \emph{quadratic module} if it contains 1 and  is closed under addition and  multiplication with squares, i.e.,
\[ 1\in\mathcal{M}, \mathcal{M} + \mathcal{M} \subseteq \mathcal{M}, \mbox{ and } p^2 \mathcal{M} \subseteq \mathcal{M}~ \mbox{ for all } p \in \mathbb{R}[\xx].\]
\end{definition}
\begin{definition}
  Let $\overline{p}:=\{p_1,\ldots,p_m\}$ be a finite subset of $\mathbb{R}[\xx]$. The quadratic module $\mathcal{M}_{\xx}( \overline{p})$, or simply $\mathcal{M}(\overline{p})$, generated by $\overline{p}$ is the smallest quadratic module containing
  all $p_i$, i.e.,
  \begin{equation*}
  \mathcal{M}_{\xx}(\overline{p})\coloneqq \{\sigma_0+\sum_{i=1}^m \sigma_i p_i \mid \sigma_0, \sigma_i \in \mathrm{\Sigma}[\xx]\}.    
  \end{equation*}
\end{definition}
Let $S$ be a closed basic semialgebraic set described by $\overline{p}\ge \boldsymbol{0}$, i.e., 
\begin{equation}\label{eq:S}
    S \coloneqq \{\xx \in\RR^r \mid p_1(\xx) \ge 0, \ldots, p_m (\xx) \ge 0\}.
\end{equation}
Since SOS polynomials are non-negative, it is easy to verify that the quadratic module $\mathcal{M}(\overline{p})$ is a subset of polynomials that are nonnegative on $S$. 
In fact, under the so-called the Archimedean condition, the quadratic module $\mathcal{M}(\overline{p})$ contains all polynomials that are strictly positive over $S$.
Both the condition and the statement are formalized as follows.

\begin{definition}[Archimedean \cite{marshall2008book}]
  Let $\mathcal{M}$ be a quadratic module of $\mathbb{R}[\xx]$. $\mathcal{M}$ is said to be \emph{Archimedean} if there exists some $a>0$ such that $a-\|\xx\|^2 \in \mathcal{M}$. 
  Furthermore, if $\mathcal{M}(\overline{p})$ is Archimedean, we say that the semialgebraic set $S$ as defined in \cref{eq:S} is of the \emph{Archimedean form}.
\end{definition}

\begin{theorem}[Putinar's Positivstellensatz \cite{putinar93}]
\label{thm:putinar}
    Let $\overline{p}\coloneqq \{p_1,\ldots,p_m\}$ and 
    $S$ be defined in \cref{eq:S}.
    Assume that the quadratic module $\mathcal{M}(\overline{p})$ is Archimedean. If $f(\xx)> 0$ over $S$, then $f\in \mathcal{M}(\overline{p})$.
\end{theorem}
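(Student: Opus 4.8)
Putinar's Positivstellensatz is a classical result, so I will only describe the standard route. Write $\mathcal{M}\coloneqq\mathcal{M}(\overline{p})$. The plan is to argue by contraposition: assuming $f\notin\mathcal{M}$, I will produce a ring homomorphism $\varphi\colon\RR[\xx]\to\RR$ (an $\RR$-character) with $\varphi(\mathcal{M})\subseteq[0,\infty)$ and $\varphi(f)\le 0$. Granting this, the point $\xx_{0}\coloneqq(\varphi(x_{1}),\dots,\varphi(x_{r}))$ satisfies $p_{i}(\xx_{0})=\varphi(p_{i})\ge 0$ for every $i$ (since each $p_{i}\in\mathcal{M}$), hence $\xx_{0}\in S$, while $f(\xx_{0})=\varphi(f)\le 0$ contradicts $f>0$ on $S$; so the whole theorem reduces to constructing $\varphi$. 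An equivalent route would instead use Hahn--Banach to separate the convex cone $\mathcal{M}$ from $f$ by a linear functional $L$ with $L(1)=1$, $L\ge 0$ on $\mathcal{M}$ and $L(f)\le 0$, after which the Archimedean hypothesis forces $L$ to be integration against a Borel measure supported on $S$ and $\int f\,d\mu>0$ yields the contradiction. I follow the character route.

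First I would strengthen the Archimedean hypothesis $a-\|\xx\|^{2}\in\mathcal{M}$ to its bounded form: for every $g\in\RR[\xx]$ there is $N\in\NN$ with $N-g\in\mathcal{M}$ and $N+g\in\mathcal{M}$. The plan here is to show the set $B$ of all $g$ with this property is an $\RR$-subalgebra of $\RR[\xx]$ containing $\RR$ and each $x_{i}$, hence equal to $\RR[\xx]$. It is an $\RR$-subspace by inspection; it contains $\RR$ because $N-c=(\sqrt{N-c})^{2}\in\mathcal{M}$ for $N>c$; it contains each $x_{i}$ because $a-\|\xx\|^{2}\in\mathcal{M}$ gives $a-x_{i}^{2}\in\mathcal{M}$, hence $N-x_{i}^{2}\in\mathcal{M}$ and $2N\pm 2x_{i}=(x_{i}\pm 1)^{2}+(2N-1-x_{i}^{2})\in\mathcal{M}$ for $N$ large; and it is closed under products because $ab=\tfrac12\bigl((a+b)^{2}-a^{2}-b^{2}\bigr)$ reduces this to the implication $g\in B\Rightarrow g^{2}\in B$, which follows from $\mathrm{\Sigma}[\xx]\cdot\mathcal{M}\subseteq\mathcal{M}$ and the identity $2N\,(N^{2}-g^{2})=(N+g)^{2}(N-g)+(N-g)^{2}(N+g)$ (divide by the square $(1/\sqrt{2N})^{2}$ to land in $\mathcal{M}$).

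The main step, which I expect to be the principal obstacle, is the construction of $\varphi$ --- this is precisely the Kadison--Dubois (Archimedean) representation theorem for quadratic modules~\cite{marshall2008book}. I would prove it in three moves: (i) by Zorn's lemma enlarge $\mathcal{M}$ to a quadratic module $\mathcal{M}^{*}$ maximal with respect to $f\notin\mathcal{M}^{*}$ (a chain of such modules has union of the same kind); it is still Archimedean and is proper, since $-1\in\mathcal{M}^{*}$ would give $f\in\mathcal{M}^{*}$; (ii) show that the relation $g\preceq h$ defined by $h-g\in\mathcal{M}^{*}$ is a total preorder on $\RR[\xx]$ compatible with addition and multiplication, the delicate point being totality ($g\in\mathcal{M}^{*}$ or $-g\in\mathcal{M}^{*}$ for every $g$), which rests on a cancellation lemma special to maximal Archimedean modules; (iii) note that the bounded form of the Archimedean property squeezes every $g\in\RR[\xx]$ between two rationals in this order, so Dedekind completeness of $\RR$ yields an order- and ring-homomorphism $\varphi\colon\RR[\xx]\to\RR$, explicitly $\varphi(g)=\sup\{q\in\QQ\mid g-q\in\mathcal{M}^{*}\}$. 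Then $\varphi(\mathcal{M})\subseteq\varphi(\mathcal{M}^{*})\subseteq[0,\infty)$, and $\varphi(f)\le 0$: if $\varphi(f)>0$ then $f-q\in\mathcal{M}^{*}$ for some rational $q>0$, and since $q=(\sqrt q)^{2}\in\mathcal{M}^{*}$ we would get $f=(f-q)+q\in\mathcal{M}^{*}$, contradicting $f\notin\mathcal{M}^{*}$. This $\varphi$ finishes the argument. Almost all the effort sits in (ii)--(iii) --- the structure theory of the maximal module $\mathcal{M}^{*}$, i.e.\ totality, the cancellation lemma, and multiplicativity of $\varphi$, which together are the Kadison--Dubois theorem itself.
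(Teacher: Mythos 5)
The paper does not prove Putinar's Positivstellensatz: it is imported from \cite{putinar93} as a known fact, so there is no in-paper argument to compare against. Your sketch is the standard modern algebraic proof via the Jacobi / Kadison--Dubois Archimedean representation theorem (as presented in \cite{marshall2008book}), rather than Putinar's original operator-theoretic argument, and the portions you carry out explicitly are correct. In particular, upgrading $a-\|\xx\|^2\in\mathcal{M}$ to the bounded form through the subalgebra $B$ is argued properly, including the identity $2N(N^2-g^2)=(N+g)^2(N-g)+(N-g)^2(N+g)$ together with multiplication by the square $(1/\sqrt{2N})^2$ to conclude $N^2-g^2\in\mathcal{M}$; and the passage from an $\RR$-character $\varphi$ with $\varphi(\mathcal{M})\subseteq[0,\infty)$ and $\varphi(f)\le 0$ to the point $\xx_0=(\varphi(x_1),\dots,\varphi(x_r))\in S$ satisfying $f(\xx_0)\le 0$ is exactly the right specialization of the abstract theorem to $\RR[\xx]$. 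You also correctly locate the genuine difficulty in your steps (ii)--(iii): for a maximal Archimedean quadratic module $\mathcal{M}^*$ excluding $f$, establishing totality $\mathcal{M}^*\cup(-\mathcal{M}^*)=\RR[\xx]$, primality of the support $\mathcal{M}^*\cap(-\mathcal{M}^*)$, and thence multiplicativity of $\varphi$ is Jacobi's cancellation lemma, which is genuinely nontrivial precisely because quadratic modules are not closed under multiplication --- the Archimedean hypothesis does real work there, not merely Zorn's lemma. Deferring that block to \cite{marshall2008book} is the right call for a sketch; as written, your reduction of Putinar's theorem to Kadison--Dubois is faithful.
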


The above theorem serves as a key result in real algebraic geometry, as it provides a simple characterization of polynomials that are locally positive on closed basic semialgebraic sets.
Because of this, \cref{thm:putinar} is widely used in the field of polynomial optimization, referring to \cite{lasserre09book,marshall2008book} for an in-depth treatment of this topic.

Though powerful, \cref{thm:putinar} relies on the Archimedean condition.
Note that the inclusion $a-\|\xx\|^2 \in \mathcal{M}$ implies that $a-\|\xx\|^2 \ge 0$ over $S$, deducing that $S$ is contained in a ball with radius~$\sqrt{a}$.
As a result, in case that the set $S$ is unbounded, \cref{thm:putinar} is not directly applicable.

\subsection{Homogenization} \label{ssec:homo}
Let $\xx=(x_1, \ldots,x_r)\in\RR^{r}$ be an $r$-tuple of variables and
$x_0$ a fresh variable.
Suppose that $f(\xx)\in\RR[\xx]$ is a polynomial of degree $d_f$.
We denote by $\tilde{f}(x_0,\xx)\in\RR[x_0,\xx]$ the homogenization of $f(\xx)$
which is obtained by substituting $\frac{x_1}{x_0}$ for $x_1$, ..., $\frac{x_r}{x_0}$ for $x_r$ in $f(\xx)$ and then multiplying with $x_0^{d_f}$, that is, 
\begin{equation} \label{eq:hom:var}
  \tilde{f}(x_0,\xx)\coloneqq 
  x_0^{d_f} f(\frac{x_1}{x_0}, \ldots, \frac{x_r}{x_0}).
\end{equation}
For example, if $f(\xx)=x_1^3+2x_1x_2+3x_2+4$, then $\tilde{f}(x_0,\xx)=x_1^3+2x_0x_1x_2+3x_0^2x_2+4x_0^3$.
In what follows, we always use the variable $x_0$ as the homogenizing variable.

Let $S$ be defined as in \cref{eq:S}. We define the following set related to $S$ by homogenizing polynomials in the description of $S$:
\begin{equation} \label{eq:tSH}
    \tilde{S}^h\coloneqq\{(x_0,\xx)\in\RR^{r+1}\mid \tilde{p}_1(x_0, \xx) \ge 0 ,\ldots,\tilde{p}_m (x_0, \xx) \ge 0,x_0>0, x_0^2+\|\xx\|^2=1\}.
\end{equation}
Obviously, the following property holds.
\begin{proper} \label{property:hom}
  Let $S$ be as in \cref{eq:S} and $\tilde{S}^h$ be defined as above. Then, $\xx \in S$ if and only if
  \begin{equation*}
    \left(\frac{1}{\sqrt{1+\|\xx\|^2}}, \frac{x_1}{\sqrt{1+\|\xx\|^2}}, \ldots, \frac{x_r}{\sqrt{1+\|\xx\|^2}}\right)\in \tilde{S}^h.
  \end{equation*}
  Moreover, $(x_0, \xx) \in \tilde{S}^h$ if and only if 
  $(\frac{x_1}{x_0}, \ldots, \frac{x_r}{x_0})\in S$. 
\end{proper}
\begin{proof}
It is straightforward to verify.
\end{proof}

\cref{property:hom} shows that there exists a one-to-one correspondence between points in $S\in \RR^n$ and those in $\tilde{S}^h\in \RR^{n+1}$. 

We also define the set $\tilde{S}$ by replacing $x_0>0$ in \cref{eq:tSH} with $x_0\ge 0$:
\begin{equation} \label{eq:tS}
  \tilde{S}\coloneqq\{ (x_0, \xx)\in\RR^{r+1}\mid \tilde{p}_1(x_0, \xx) \ge 0,\ldots,\tilde{p}_m (x_0, \xx) \ge 0,x_0\ge0, x_0^2 + \|\xx\|^2=1\}.
\end{equation}
To capture the relation between $\tilde{S}^h$ and $\tilde{S}$, we introduce the following definition and a related useful lemma.

\begin{definition}\label{def:closed}
A closed basic semialgebraic set $S$ is closed at $\infty$ if $\cl(\tilde{S}^h)=\tilde{S}$.
\end{definition}

\begin{lemma}[\cite{huang2023homogenization}]\label{sec2:lm} 
Let $f\in\RR[\xx]$ and $S$ be a closed basic semialgebraic set. Then $f\ge0$ on $S$ if and only if $\tilde{f}\ge0$ on $\cl(\tilde{S}^h)$. Moreover, assuming that $S$ is closed at $\infty$, then $f\ge0$ on $S$ if and only if $\tilde{f}\ge0$ on $\tilde{S}$.
\end{lemma}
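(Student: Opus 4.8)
The plan is to prove the two equivalences in turn, exploiting the explicit bijections recorded in \cref{property:hom}. First I would establish the second (and sharper) statement under the hypothesis that $S$ is closed at $\infty$, i.e.\ $\cl(\tilde S^h)=\tilde S$; the first statement then follows by the same argument with $\tilde S$ replaced by $\cl(\tilde S^h)$, since that replacement requires no closedness-at-$\infty$ assumption. The key observation is the homogeneous-degree identity
\begin{equation*}
  \tilde f(x_0,\xx) = x_0^{d_f}\, f\!\left(\tfrac{x_1}{x_0},\ldots,\tfrac{x_r}{x_0}\right)
  \qquad\text{for all } x_0>0,
\end{equation*}
so that on the portion of the sphere where $x_0>0$ the sign of $\tilde f(x_0,\xx)$ equals the sign of $f(x_1/x_0,\ldots,x_r/x_0)$ (as $x_0^{d_f}>0$).

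For the forward direction of the first equivalence, suppose $f\ge 0$ on $S$. Take any $(x_0,\xx)\in\cl(\tilde S^h)$. If $x_0>0$, then $(x_0,\xx)\in\tilde S^h$ is impossible to fail only on a boundary; more carefully, pick a sequence $(x_0^{(k)},\xx^{(k)})\in\tilde S^h$ converging to $(x_0,\xx)$. By the second half of \cref{property:hom}, each $\yy^{(k)}\coloneqq(x_1^{(k)}/x_0^{(k)},\ldots,x_r^{(k)}/x_0^{(k)})\in S$, hence $f(\yy^{(k)})\ge 0$, hence $\tilde f(x_0^{(k)},\xx^{(k)})=(x_0^{(k)})^{d_f}f(\yy^{(k)})\ge 0$; letting $k\to\infty$ and using continuity of $\tilde f$ gives $\tilde f(x_0,\xx)\ge 0$. (When $x_0=0$ the same limiting argument applies directly.) Conversely, if $\tilde f\ge 0$ on $\cl(\tilde S^h)$, then for any $\xx\in S$ the point $\bigl(\tfrac{1}{\sqrt{1+\|\xx\|^2}},\tfrac{x_1}{\sqrt{1+\|\xx\|^2}},\ldots\bigr)$ lies in $\tilde S^h\subseteq\cl(\tilde S^h)$ by the first half of \cref{property:hom}, so $\tilde f$ is nonnegative there; since its $x_0$-coordinate is strictly positive, dividing by the positive power $x_0^{d_f}$ recovers $f(\xx)\ge 0$. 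This proves the first equivalence. For the ``moreover'' part, the closed-at-$\infty$ hypothesis is exactly the identity $\cl(\tilde S^h)=\tilde S$, so $f\ge 0$ on $S\iff\tilde f\ge 0$ on $\cl(\tilde S^h)=\tilde S$.

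The step I expect to need the most care is the forward direction, specifically handling limit points of $\tilde S^h$ with $x_0$-coordinate tending to $0$: there the reciprocal $x_i/x_0$ blows up, so one cannot transfer the inequality pointwise and must instead transfer it along the approximating sequence (where $x_0^{(k)}>0$) and then pass to the limit using continuity of the \emph{polynomial} $\tilde f$ — which is precisely why homogenizing first and only afterwards taking closures is the right order of operations. Everything else is bookkeeping with \cref{property:hom} and the sign of $x_0^{d_f}$.
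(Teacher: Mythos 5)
The paper cites this lemma from the external reference \cite{huang2023homogenization} and does not supply its own proof (it is not among the omitted proofs in \cref{app}), so there is no in-paper argument to compare against. Your proof is correct on its own terms: the forward implication approximates any $(x_0,\xx)\in\cl(\tilde S^h)$ by a sequence in $\tilde S^h$, uses \cref{property:hom} and the identity $\tilde f(x_0,\xx)=x_0^{d_f}f(x_1/x_0,\ldots,x_r/x_0)$ (valid for $x_0>0$) to transfer nonnegativity along the sequence, and then passes to the limit by continuity of the polynomial $\tilde f$ — this is exactly the right way to handle the boundary $x_0=0$, and indeed this is the only step with genuine content. The reverse implication uses the first half of \cref{property:hom} and the strict positivity of $x_0^{d_f}$ at the image point, and the ``moreover'' part is immediate from the definition $\cl(\tilde S^h)=\tilde S$.

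Two small presentational points. Your opening paragraph says you will prove the second (conditional) equivalence first and deduce the first from it, but the logical dependence goes the other way — and your actual argument correctly proves the unconditional equivalence and then substitutes $\cl(\tilde S^h)=\tilde S$; you should rewrite the plan to match. Also, the clause ``$(x_0,\xx)\in\tilde S^h$ is impossible to fail only on a boundary'' is garbled and adds nothing; drop it and go straight to the sequence argument, which as written already covers both the $x_0>0$ and $x_0=0$ cases uniformly (so the parenthetical about $x_0=0$ is likewise unnecessary).
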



Let us define
\begin{align} \label{eq:Sinf}
    S^{(\infty)}&\coloneqq\{\xx\in\RR^{r} \mid p_1^{(\infty)}(\xx) \ge 0,\ldots,p_m^{(\infty)}(\xx) \ge 0, \|\xx\|^2=1\},
\end{align}
where $p^{(\infty)}(\xx)$ denotes the highest degree homogeneous part of a polynomial $p(\xx)\in \RR[\xx]$, e.g.,
if $p=x_1^2+2x_1x_2+3x_2^2+4x_1+5x_2$, then $p^{(\infty)}=x_1^2+2x_1x_2+3x_2^2$.


\begin{proper}
  \label{prop:3}
  Let $\tilde{S}^h$, $\tilde{S}$ and $S^{(\infty)}$ be defined as above. If $S^{(\infty)}$ is empty, then $\tilde{S}^h=\tilde{S}$.
\end{proper}
\begin{proof}
  It is straightforward to verify.
\end{proof}

\subsection{Problem Description} \label{ssec:prob}

Given two formulas $\phi$ and $\psi$ in a first-order theory $\TT$ s.t. $\phi \models \psi$, Craig showed that there always exists an \emph{interpolant} $I$ over the common symbols of $\phi$ and $\psi$ s.t.  $\phi \models I$ and $I \models \psi$. 
In the context of verification, we slightly abuse the terminology following \cite{mcmillan05}: 
A \emph{reverse interpolant} (as coined in \cite{KV09}) $I$ over the common symbols of $\phi$ and $\psi$ is defined as follows.
\begin{definition}[Interpolant]
\label{interpolant}
  Given two formulas $\phi$ and $\psi$ in a theory $\TT$ s.t. $\phi \wedge \psi \models_{\TT} \bot$,
  a formula $I$ is an \emph{interpolant} of $\phi$ and $\psi$ if 
  (1) $\phi \models_{\TT} I$;
  (2) $I \wedge \psi \models_{\TT} \bot$; and 
  (3) $I$ only contains common symbols and free variables
  shared by $\phi$ and $\psi$.
\end{definition}

The interpolant synthesis problem of interest in this paper is formulated as follows.

\begin{problem} \label{problem:1}
Let $\phi(\xx,\yy)$ and $\psi(\xx,\zz)$ be two polynomial formulas of the form
\begin{align}
  \phi(\xx,\yy) &\coloneqq \bigvee_{k=1}^{K_\phi} \bigwedge_{i=1}^{m_k} f_{k,i}(\xx,\yy) \ge 0, \label{eq:phi}\\
  \psi(\xx,\zz) &\coloneqq \bigvee_{k'=1}^{K_\psi} \bigwedge_{j=1}^{n_{k'}} g_{k',j}(\xx,\zz) \ge 0, \label{eq:psi}
 \end{align}
where $\xx \in \mathbb{R}^{r_1}$, $\yy \in \mathbb{R}^{r_2}$, $\zz \in \mathbb{R}^{r_3}$ are variable vectors, $r_1,r_2,r_3 \in \mathbb{N}$,
and $f_{k,i}, g_{k',j}$ are polynomials in the corresponding variables.
We aim to find a function $h(\xx)$ such that $h(\xx)>0$ is an interpolant for $\phi$ and $\psi$, i.e.,
\begin{equation*}
    \phi(\xx,\yy) \models h(\xx)>0 \text{ and } \psi(\xx,\zz) \models h(\xx)<0.
\end{equation*}
Here $h(\xx)$ is called an interpolant function. 
Specifically, we are interested in two scenarios where
\begin{enumerate}
    \item \textbf{Polynomial interpolants}: the function $h(\xx)$ is a polynomial in $\RR[\xx]$;
    \item \textbf{Semialgebraic interpolants}\footnote{A function $f(\xx)$ is called semialgebraic if its graph $\{(\xx,f(\xx))\mid \xx\in \RR^r\}$ is a semialgebraic set. 
    The graph of $h(\xx)$ is $\{\xx\in \RR^r\mid \exists w.~h(\xx)=h_1(\xx)+w\cdot h_2(\xx)\wedge w^2=1+\|\xx\|^2\wedge w\ge 0\}$.
    }: the function $h(\xx)$ can be expressed as 
    \begin{equation}\label{eq:h}
    h(\xx) = h_1(\xx) + \sqrt{\|\xx\|^2+1}\cdot h_2(\xx),
    \end{equation}
    with $h_1(\xx), h_2(\xx)\in \RR[\xx]$.
\end{enumerate}
Obviously, the second case degenerates to the first case when $h_2(\xx)=0$.
\end{problem}

\begin{remark}
    Like in \cite{gan2020nonlinear,Gao14}, we require $\phi$ and $\psi$ to be defined by \emph{non-strict} polynomial inequalities, mainly for two reasons:
    (1)~Theoretically, our approach relies on \cref{thm:putinar}, which necessitates a closed underlying basic semialgebraic set.
    (2)~Numerically, we employ numerical solvers incapable of distinguishing $\ge$ from $>$.
    In the coming sections, we will see the significance of both closedness and closedness at $\infty$ for the existence of interpolants.
\end{remark}

\section{Existence of Interpolant}  \label{sec:existence}

In this section, we prove the existence of a semialgebraic interpolant function $h(\xx)$ of the form \cref{eq:h}, under certain conditions on $\phi$ and $\psi$.
In \cref{ssec:3.1}, we begin by focusing on the scenario where both $\phi$ and $\psi$ exclusively involve the variable $\xx$.
Subsequently, in \cref{ssec:3.2}, we expand our scope to the case where unshared variables, $\yy$ and $\zz$, emerge.
\vspace{-3mm}
\subsection{Interpolant between $\phi(\xx)$ and $\psi(\xx)$}
\label{ssec:3.1}

In this part, we prove the existence of a semialgebraic interpolant function of the form  in \cref{eq:h} that separates the two closed semialgebraic sets in $\RR^r$ corresponding to $\phi(\xx)$ and $\psi(\xx)$.
The basic idea goes as follows: 
First, we consider the problem of finding a semialgebraic function $h(\xx)$ such that $h(\xx)=0$ separates two closed basic semialgebraic sets $S_1$ and $S_2$ in $\mathbb{R}^{r}$.
Using the homogenization technique, we prove that there exists a polynomial $g\in \RR[x_0, \xx]$ with $g(x_0, \xx)=0$ separating $\tilde{S_1}$ and $\tilde{S_2}$, and the existence of $h(\xx)$ is directly induced by that of $g$ (see \cref{sec3:prop2}).
After that, we extend the result to the case where $S_1$ becomes a closed semialgebraic set (see \cref{sec3:basic-semi}) and when both $S_1$ and $S_2$ are closed semialgebraic sets (see \cref{sec3:thm1}).

We begin by recapping an existing result from \cite{gan2020nonlinear}.

\begin{proposition}[{\cite[Lem.~2]{gan2020nonlinear}}]
\label{sec3:prop1}
Let 
$S_1=\{\xx\in\RR^{r} \mid p_1(\xx) \ge 0,\ldots,p_m (\xx) \ge 0\}$, 
$S_2=\{\xx\in\RR^{r} \mid q_1(\xx) \ge 0,\ldots,q_n (\xx) \ge 0\}$ 
be two closed basic semialgebraic sets of the Archimedean form. 
Assuming that $S_1\cap S_2=\emptyset$, then there exists a polynomial $h(\xx)\in\RR[\xx]$ such that
\begin{equation}\label{sec3:eq1}
    \forall \xx\in S_1.~h(\xx)>0 \text{ and } \forall \xx\in S_2.~-h(\xx)>0.
\end{equation}
\end{proposition}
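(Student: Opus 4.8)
The plan is to apply Putinar's Positivstellensatz (\cref{thm:putinar}) twice, combined with a standard separation-by-convex-cones argument. Since $S_1 \cap S_2 = \emptyset$ and both sets are closed basic semialgebraic sets of the Archimedean form, consider the combined description. The key observation is that the system $p_1(\xx)\ge 0, \ldots, p_m(\xx)\ge 0, q_1(\xx)\ge 0, \ldots, q_n(\xx)\ge 0$ has no solution in $\RR^r$, so the constant $-1$ is strictly positive (vacuously) on this empty set. The quadratic module $\mathcal{M}(\{p_i\}\cup\{q_j\})$ is Archimedean because $\mathcal{M}(\{p_i\})$ already is (the witness $a-\|\xx\|^2$ for $S_1$ works for the larger module). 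Hence by \cref{thm:putinar}, $-1 \in \mathcal{M}(\{p_i\}\cup\{q_j\})$, i.e.,
\begin{equation*}
  -1 = \sigma_0 + \sum_{i=1}^m \sigma_i p_i + \sum_{j=1}^n \tau_j q_j,
\end{equation*}
with $\sigma_0, \sigma_i, \tau_j \in \mathrm{\Sigma}[\xx]$.

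Next I would split this identity into the two "sides." Define $h(\xx) \coloneqq \tfrac{1}{2}\bigl(1 + \sigma_0 + 2\sum_{i=1}^m \sigma_i p_i\bigr)$, or more cleanly rearrange the Positivstellensatz identity as
\begin{equation*}
  \sigma_0 + \sum_{i=1}^m \sigma_i p_i + 1 = -\sum_{j=1}^n \tau_j q_j,
\end{equation*}
and take $h$ to be (a scalar multiple of) the left-hand side, so that $h = 1 + \sigma_0 + \sum_i \sigma_i p_i$ on one hand and $h = -\sum_j \tau_j q_j$ on the other. Then for $\xx \in S_1$ we have $p_i(\xx)\ge 0$ and the SOS terms are nonnegative, giving $h(\xx) \ge 1 > 0$. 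For $\xx \in S_2$ we have $q_j(\xx)\ge 0$, so $h(\xx) = -\sum_j \tau_j(\xx) q_j(\xx) \le 0$; to upgrade $\le 0$ to the strict $-h(\xx) > 0$ required by \cref{sec3:eq1}, I would instead apply \cref{thm:putinar} to the polynomial $-1/2$ (or any negative constant) rather than reusing the bare identity, and keep a strictly positive constant slack on the $S_1$ side while ensuring the $S_2$ side stays bounded away from $0$ — concretely, write $-1 = \sigma_0 + \sum \sigma_i p_i + \sum \tau_j q_j$ and set $h \coloneqq \tfrac12 + \sigma_0 + \sum_i \sigma_i p_i$, so $h \ge \tfrac12$ on $S_1$ and $h = -\tfrac12 - \sum_j \tau_j q_j \le -\tfrac12$ on $S_2$, yielding both strict inequalities at once.

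The main obstacle — really the only subtle point — is making sure the Archimedean property transfers to the joint quadratic module and handling the strictness on both sides simultaneously from a single certificate; the symmetric "$\pm\tfrac12$ slack" trick above resolves the latter, and the former is immediate since adding more generators to a quadratic module can only enlarge it. One should also double-check the degenerate cases where $S_1$ or $S_2$ is empty, but then any sufficiently large or small constant $h$ works trivially. Everything here is constructive and the witnesses $\sigma_0,\sigma_i,\tau_j$ are exactly what an SOS/SDP solver would produce, which dovetails with the later sections of the paper.
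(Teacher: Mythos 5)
Your proof is correct; the paper itself cites this result from \cite{gan2020nonlinear} without reproducing the argument, and your route --- apply Putinar's Positivstellensatz to the jointly Archimedean quadratic module generated by $\{p_i\}\cup\{q_j\}$ (which defines the empty set) to obtain $-1=\sigma_0+\sum_i\sigma_i p_i+\sum_j\tau_j q_j$, then split the certificate as $h\coloneqq\tfrac12+\sigma_0+\sum_i\sigma_i p_i=-\tfrac12-\sum_j\tau_j q_j$ --- is exactly the standard separation argument underlying that cited lemma. The $\pm\tfrac12$ slack correctly yields strict inequalities on both sides, and your observation that enlarging the generator set preserves Archimedeanness closes the only delicate point.
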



It is important to emphasize that the proof of \cref{sec3:prop1} relies on \cref{thm:putinar} and hence is limited to the case where the sets $S_1$ and $S_2$ are of the Archimedean form.
In the following \cref{sec3:prop2}, we show how to remove this restriction. 

\begin{proposition}\label{sec3:prop2}
Let $S_1=\{\xx\in\RR^{r} \mid p_1(\xx) \ge 0,\ldots,p_m (\xx) \ge 0\}$, $S_2=\{\xx\in\RR^{r} \mid q_1(\xx) \ge 0,\ldots,q_n (\xx) \ge 0\}$ be closed basic semialgebraic sets. Assuming that $\tilde{S}_1\cap \tilde{S}_2=\emptyset$, then there exists a semialgebraic function $h(\xx)$ of the form in \cref{eq:h} such that
\begin{equation}\label{sec3:eq2}
    \forall \xx\in S_1.~h(\xx)>0 \text{ and }\forall \xx\in S_2.~ -h(\xx)>0.
\end{equation}
\end{proposition}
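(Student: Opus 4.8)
The plan is to reduce the unbounded separation problem for $S_1,S_2\subseteq\RR^r$ to a bounded one in $\RR^{r+1}$ via homogenization, apply \cref{sec3:prop1} there, and then pull the resulting polynomial back down to a semialgebraic function on $\RR^r$. First I would introduce the homogenizing variable $x_0$ and consider the sets $\tilde{S}_1$ and $\tilde{S}_2$ as defined in \cref{eq:tS}; by construction each is a closed basic semialgebraic set living on the unit sphere $x_0^2+\|\xx\|^2=1$ in $\RR^{r+1}$, hence each is contained in the ball of radius $1$. I would then check that the quadratic modules generated by their defining polynomials (which include the constraint $x_0^2+\|\xx\|^2=1$, equivalently the pair of inequalities $1-x_0^2-\|\xx\|^2\ge 0$ and $x_0^2+\|\xx\|^2-1\ge 0$) are Archimedean: indeed $1-x_0^2-\|\xx\|^2$ already lies in the module, so $\tilde{S}_1$ and $\tilde{S}_2$ are of the Archimedean form.

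Given the hypothesis $\tilde{S}_1\cap\tilde{S}_2=\emptyset$, \cref{sec3:prop1} applied in $\RR^{r+1}$ yields a polynomial $g(x_0,\xx)\in\RR[x_0,\xx]$ with $g>0$ on $\tilde{S}_1$ and $-g>0$ on $\tilde{S}_2$. The next step is the descent: for $\xx\in S_1$, \cref{property:hom} tells us that the point $\big(\tfrac{1}{\sqrt{1+\|\xx\|^2}},\tfrac{x_1}{\sqrt{1+\|\xx\|^2}},\ldots,\tfrac{x_r}{\sqrt{1+\|\xx\|^2}}\big)$ lies in $\tilde{S}^h_1\subseteq\tilde{S}_1$, so $g$ evaluated at this point is positive, and symmetrically negative for $\xx\in S_2$. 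Writing $w\coloneqq\sqrt{1+\|\xx\|^2}$, the quantity $w^{d_g}\,g(1/w,\,x_1/w,\ldots,x_r/w)$ is, by the usual dehomogenization identity, a polynomial expression in $\xx$ and $w$; separating the even and odd powers of $w$ and using $w^2=1+\|\xx\|^2$, it collapses to the form $h_1(\xx)+w\,h_2(\xx)$ with $h_1,h_2\in\RR[\xx]$, which is exactly the shape \cref{eq:h}. Since $w^{d_g}>0$, multiplying by it preserves signs, so $h(\xx)>0$ on $S_1$ and $h(\xx)<0$ on $S_2$, establishing \cref{sec3:eq2}.

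The step I expect to require the most care is the bookkeeping in the descent — confirming that $w^{d_g}g(1/w,x_1/w,\ldots,x_r/w)$ is genuinely a polynomial in $\xx$ and $w$ (this is where the definition of homogenization in \cref{eq:hom:var} is used: clearing denominators costs exactly $x_0^{d_g}$, and here $x_0=1/w$), and that after the substitution $w^2\mapsto 1+\|\xx\|^2$ only a single leftover factor of $w$ can survive, so that $h$ really has the claimed semialgebraic form rather than involving higher powers of the square root. One should also be mildly careful that points of $\tilde{S}_i$ with $x_0=0$ (the "points at infinity") play no role in the descent — they never arise as images of points of $S_i$ — but they \emph{are} covered by the separation guarantee from \cref{sec3:prop1}, which is why the hypothesis is stated in terms of $\tilde{S}_1\cap\tilde{S}_2=\emptyset$ rather than $S_1\cap S_2=\emptyset$; this is precisely the slack that makes the unbounded case tractable. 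Everything else is routine: the Archimedean check is immediate, and the sign preservation is a one-line observation.
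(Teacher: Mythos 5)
Your proposal is correct and follows essentially the same route as the paper's proof: homogenize to $\tilde{S}_1,\tilde{S}_2$ (noting they are Archimedean via $1-x_0^2-\|\xx\|^2$), invoke \cref{sec3:prop1} to obtain a separating polynomial $g\in\RR[x_0,\xx]$, and dehomogenize via $x_0\mapsto 1/\sqrt{1+\|\xx\|^2}$, $x_i\mapsto x_i/\sqrt{1+\|\xx\|^2}$, multiplying by $(\sqrt{1+\|\xx\|^2})^{\deg g}$ to clear denominators and then reducing even powers of $\sqrt{1+\|\xx\|^2}$ to obtain the form $h_1(\xx)+\sqrt{1+\|\xx\|^2}\,h_2(\xx)$. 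Your bookkeeping about clearing denominators, the sign of the positive prefactor, and the role of points at infinity matches the paper's argument exactly.
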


\begin{proof}
By the definition of $\tilde{S}$ in \cref{eq:tS}, we know that $\tilde{S}_1$ and $\tilde{S}_2$ are two basic semialgebraic sets of the Archimedean form (as $1-x_0^2 - \|\xx\|^2$ belongs to the corresponding quadratic modules). 
Since $\tilde{S}_1\cap\tilde{S}_2=\emptyset$, by invoking \cref{sec3:prop1}, there exists a polynomial $g\in\RR[x_0, \xx]$ such that
\begin{equation} \label{eq:intgy}
    \forall (x_0, \xx)\in \tilde{S}_1.~g(x_0, \xx)>0 \text{ and }\forall (x_0,\xx)\in \tilde{S}_2.~-g(x_0, \xx)>0.
\end{equation}
Note that for any $\xx\in S_1$ (resp. $S_2$), by \cref{property:hom} we have $(\frac{1}{\sqrt{\|\xx\|^2+1}},\frac{\xx}{\sqrt{\|\xx\|^2+1}})\in\tilde{S}_1$ (resp. $\tilde{S}_2$).
Let 
\begin{equation}\label{eq:h-from-g}  h(\xx)\coloneqq(\sqrt{\|\xx\|^2+1})^{\deg(g)} g(\frac{1}{\sqrt{\|\xx\|^2+1}},\frac{\xx}{\sqrt{\|\xx\|^2+1}}).
\end{equation}
Since $(\sqrt{\|\xx\|^2+1})^{\deg(g)} \ge 1$, combining with \cref{eq:intgy}, we have that $h(\xx)$ satisfies \cref{sec3:eq2}.

To see that $h(\xx)$ admits the form in \cref{eq:h}, we expand the right-hand side of \cref{eq:h-from-g} and simplify the terms with power of $\sqrt{\|\xx\|^2+1}$ greater than or equal to $2$.
After simplification, we collect the terms with and without $\sqrt{\|\xx\|^2+1}$ into two groups so that $h(\xx)$ can be expressed as $h_1(\xx)+\sqrt{\|\xx\|^2+1}\cdot h_2(\xx)$ for $h_1(\xx),h_2(\xx)\in \RR[\xx]$.
\end{proof}

In order to check whether the condition $\tilde{S}_1\cap \tilde{S}_2=\emptyset$ in \cref{sec3:prop2} holds, one can use the following lemma.
\begin{lemma} \label{sec3:lm1}
Given two closed basic semialgebraic set $S_1$ and $S_2$, if $S_1\cap S_2=\emptyset$ and $S^{(\infty)}_1\cap S^{(\infty)}_2=\emptyset$, then $\tilde{S}_1\cap\tilde{S}_2=\emptyset$.
\end{lemma}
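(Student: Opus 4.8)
\textbf{Proof plan for \cref{sec3:lm1}.}
The plan is to argue by contraposition on the membership structure of $\tilde S_1$ and $\tilde S_2$, exploiting the defining equation $x_0^2 + \|\xx\|^2 = 1$ that forces every point of $\tilde S_i$ onto the unit sphere and splits it into two cases according to whether the homogenizing coordinate $x_0$ is positive or zero. Concretely, suppose for contradiction that there is a point $(x_0,\xx) \in \tilde S_1 \cap \tilde S_2$. By the definition of $\tilde S_i$ in \cref{eq:tS} we have $x_0 \ge 0$ and $x_0^2+\|\xx\|^2 = 1$, so exactly one of the following holds: (i) $x_0 > 0$, or (ii) $x_0 = 0$ (in which case $\|\xx\| = 1$).

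In case (i), I would invoke the second half of \cref{property:hom}: since $x_0 > 0$ and all $\tilde p_\ell(x_0,\xx) \ge 0$, $\tilde q_\ell(x_0,\xx) \ge 0$, the point $(x_0,\xx)$ lies in $\tilde S_1^h \cap \tilde S_2^h$, hence $\bigl(\tfrac{x_1}{x_0},\ldots,\tfrac{x_r}{x_0}\bigr)$ lies in $S_1 \cap S_2$, contradicting $S_1 \cap S_2 = \emptyset$. (Here I use that $\tilde S_i^h \subseteq \tilde S_i$, which is immediate from comparing \cref{eq:tSH} and \cref{eq:tS}, together with the correspondence in \cref{property:hom}.) In case (ii), I would observe that for a homogeneous-in-$(x_0,\xx)$ polynomial $\tilde p_\ell$, setting $x_0 = 0$ picks out precisely the top-degree homogeneous part of $p_\ell$, i.e. $\tilde p_\ell(0,\xx) = p_\ell^{(\infty)}(\xx)$ — this is exactly why the sets $S^{(\infty)}_i$ of \cref{eq:Sinf} were introduced. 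Hence $p_\ell^{(\infty)}(\xx) \ge 0$ and $q_\ell^{(\infty)}(\xx) \ge 0$ for all $\ell$, and with $\|\xx\| = 1$ this says $\xx \in S^{(\infty)}_1 \cap S^{(\infty)}_2$, contradicting the second hypothesis. Since both cases are impossible, $\tilde S_1 \cap \tilde S_2 = \emptyset$.

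The only genuinely delicate point — and the one I would write out carefully rather than wave at — is the identity $\tilde p_\ell(0,\xx) = p_\ell^{(\infty)}(\xx)$ and, relatedly, the degree bookkeeping in $\tilde p_\ell$: the homogenization in \cref{eq:hom:var} multiplies by $x_0^{d_{p_\ell}}$ where $d_{p_\ell}=\deg p_\ell$, so the monomials of $p_\ell$ of degree $d_{p_\ell}$ survive with no factor of $x_0$ while every lower-degree monomial acquires a positive power of $x_0$ and therefore vanishes at $x_0=0$; this is what makes $\tilde p_\ell(0,\xx)$ equal the highest-degree homogeneous component $p_\ell^{(\infty)}$. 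Everything else is a direct unwinding of the definitions of $\tilde S_i$, $\tilde S_i^h$, $S^{(\infty)}_i$, and \cref{property:hom}, so the lemma follows with no further machinery; in particular \cref{thm:putinar} is not needed here.
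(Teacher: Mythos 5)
Your proof is correct and is essentially the same argument as the paper's, just unfolded: the paper forms the intersection $S = S_1 \cap S_2$ as a single basic semialgebraic set, notes $\tilde S = \tilde S_1 \cap \tilde S_2$ and $S^{(\infty)} = S_1^{(\infty)} \cap S_2^{(\infty)}$, and then invokes \cref{prop:3} (which is precisely your $x_0 = 0$ case, via $\tilde p(0,\xx) = p^{(\infty)}(\xx)$) together with \cref{property:hom} (your $x_0 > 0$ case) to conclude $\tilde S = \tilde S^h = \emptyset$. Your explicit dichotomy on $x_0 > 0$ versus $x_0 = 0$ re-derives those two auxiliary properties in place, so the content is identical.
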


Now, we extend the result in \cref{sec3:prop2} to the case when $S_1$ and $S_2$ are two closed semialgebraic sets.
A closed semialgebraic set, say $T$, is a union of some closed basic semialgebraic sets, i.e., $T=\cup_{i=1}^a S_i$ with
\begin{equation*}
S_i=\{\xx\in\RR^r \mid p_{i1}(\xx) \ge 0,\ldots,p_{im_i} (\xx) \ge 0\},\quad i=1,\ldots,a,
\end{equation*}
where $p_{ik}(\xx) \in \RR[\xx]$, $m_i\in\NN$, $k=1,...,m_i$, $i=1,\ldots,a$.
Mirroring the definition of $S^{(\infty)}$ and $\tilde{S}$, we define $T^{(\infty)}\coloneqq\bigcup_{i=1}^a S_i^{(\infty)}$ and $\tilde{T}\coloneqq\bigcup_{i=1}^a\tilde{S}_i$.
In the following lemma, we deal with the case when $S_1$ in \cref{sec3:prop2} becomes a union of closed basic semialgebraic sets.

\begin{lemma}\label{sec3:basic-semi} 
Let $T_1=\cup_{i=1}^aS_{i}$ be a closed semialgebraic set with $S_i=\{\xx\in\RR^{r} \mid p_{i1}(\xx) \ge 0,\ldots,p_{im_i}(\xx) \ge 0\}$, and let $T_2=\{\xx\in\RR^{r} \mid q_1(\xx) \ge 0,\ldots,q_n (\xx) \ge 0\}$ be a closed basic semialgebraic set.
Assume that $\tilde{T}_1\cap \tilde{T_2}=\emptyset$. Then there exists a polynomial $g\in\RR[x_0, \xx]$ such that
\begin{equation}\label{sec3:eq3}
    \forall (x_0,\xx)\in \tilde{T}_1.~g(x_0,\xx)>0 \text{ and }\forall (x_0,\xx)\in \tilde{T}_2.~-g(x_0,\xx)>0.
\end{equation}
\end{lemma}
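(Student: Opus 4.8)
The plan is to separate $\tilde T_1$ from $\tilde T_2$ piece by piece using \cref{sec3:prop1}, and then glue the pieces together with a $\max$ operation followed by polynomial approximation. First I would note that each $\tilde S_i$ and $\tilde T_2$ is a closed basic semialgebraic set of the Archimedean form, because $1-x_0^2-\|\xx\|^2$ belongs to the respective quadratic module (it is one of the defining generators, since $x_0^2+\|\xx\|^2=1$ is part of the description), exactly as in the proof of \cref{sec3:prop2}. Since $\tilde S_i\cap\tilde T_2\subseteq\tilde T_1\cap\tilde T_2=\emptyset$ for every $i$, \cref{sec3:prop1} (applied in the variables $(x_0,\xx)$) supplies a polynomial $g_i\in\RR[x_0,\xx]$ with $g_i>0$ on $\tilde S_i$ and $-g_i>0$ on $\tilde T_2$; if some $\tilde S_i$ happens to be empty we simply take $g_i\equiv-1$.

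The crucial gain from homogenization is that $\tilde S_i$, $\tilde T_1$ and $\tilde T_2$ all lie on the unit sphere $\{x_0^2+\|\xx\|^2=1\}$ and are therefore compact. Hence the minimum of $g_i$ over $\tilde S_i$ and of $-g_i$ over $\tilde T_2$ are attained and positive, and after multiplying each $g_i$ by a suitable positive constant I may assume $g_i\ge 1$ on $\tilde S_i$, $g_i\le-1$ on $\tilde T_2$, and $|g_i|\le M$ on $\tilde T_1\cup\tilde T_2$ for a common $M\ge 1$. I would then set $\Phi:=\max_{1\le i\le a}g_i$, a continuous function on $\RR^{r+1}$. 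Every point of $\tilde T_1=\bigcup_i\tilde S_i$ lies in some $\tilde S_i$, so $\Phi\ge 1$ on $\tilde T_1$, while $\Phi\le-1$ on $\tilde T_2$ because each $g_i\le-1$ there. Thus $\Phi$ already separates the two sets, though it is not a polynomial.

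To finish, since $\tilde T_1\cup\tilde T_2$ is contained in the closed unit ball $B\subseteq\RR^{r+1}$ and $\Phi$ is continuous on $B$, the Stone--Weierstrass theorem yields a polynomial $g\in\RR[x_0,\xx]$ with $\sup_B|g-\Phi|<\tfrac12$. Then $g>\Phi-\tfrac12\ge\tfrac12>0$ on $\tilde T_1$ and $g<\Phi+\tfrac12\le-\tfrac12<0$ on $\tilde T_2$, which is exactly \cref{sec3:eq3}.

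I expect the genuine obstacle to be the gluing step. Purely algebraic combinations of the $g_i$ (sums, products, constant shifts) do not work, because \cref{sec3:prop1} gives no control over $g_i$ away from its own piece $\tilde S_i$, while any constant shift large enough to keep the combination positive on all of $\tilde T_1$ destroys negativity on $\tilde T_2$. The fix is to combine them through the non-polynomial operation $\max$, which behaves correctly on a union, and then to return to the polynomial world by uniform approximation --- a step that is legitimate precisely because homogenization has made everything compact. (Alternatively, one could bypass \cref{sec3:prop1} entirely and directly approximate the continuous separator $(x_0,\xx)\mapsto\mathrm{dist}((x_0,\xx),\tilde T_2)-\mathrm{dist}((x_0,\xx),\tilde T_1)$, but routing through \cref{sec3:prop1} keeps the argument uniform with the rest of the section.)
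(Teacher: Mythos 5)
Your proof is correct, and the opening step is identical to the paper's: you apply \cref{sec3:prop1} to each pair $(\tilde{S}_i,\tilde{T}_2)$ to obtain separators $g_i\in\RR[x_0,\xx]$. Where you diverge is in the gluing step. The paper constructs $g$ from $g_1,\ldots,g_a$ by a purely algebraic manipulation, citing \cite[Lem.~3]{gan2020nonlinear} --- an explicit polynomial combination that exploits the compactness of $\tilde{T}_2$ to make a suitably high power of each normalized $g_i$ decay uniformly on $\tilde{T}_2$ while blowing up on its own piece $\tilde{S}_i$. You instead form the continuous function $\Phi=\max_i g_i$ and then invoke Stone--Weierstrass on the compact ball $B$ to find a polynomial within $\tfrac12$ of $\Phi$ in sup norm. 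Both routes are legitimate and both hinge on exactly the compactness that homogenization supplies; your diagnosis of this as the ``crucial gain'' is spot on. The trade-off is that the paper's construction is explicit (one can read off degree bounds and, in principle, feed them into the SOS programming of \cref{sec:sos}), whereas the Stone--Weierstrass route is shorter and more transparent but nonconstructive --- it gives existence without a usable handle on the degree of $g$. One small remark: the uniform bound $|g_i|\le M$ you introduce does no work in the Stone--Weierstrass argument and can be dropped; it would only matter if you switched to the algebraic gluing.
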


Then, we use \cref{sec3:basic-semi} to prove the case where both sets are unions of closed basic semialgebraic sets.

\begin{theorem}\label{sec3:thm1}
Let $T_1=\cup_{i=1}^a S_{i}$ and $T_2=\cup_{j=1}^b S'_{j}$ be closed semialgebraic sets, where $S_i$ and $S'_j$ are closed basic semialgebraic sets for $i=1,\dots, a, j=1,\dots,b$.
Assume $\tilde{T}_1\cap \tilde{T}_2=\emptyset$. Then there exists a semialgebraic function $h(\xx)$ of the form in \cref{eq:h} such that
\begin{equation}\label{sec3:eq4}
    \forall \xx\in T_1.~ h(\xx)>0 \text{ and }\forall \xx\in T_2.~ -h(\xx)>0.
\end{equation}
\end{theorem}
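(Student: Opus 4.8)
The plan is to reduce \cref{sec3:thm1} to \cref{sec3:basic-semi} by the same ``merging then combining'' strategy already used to pass from \cref{sec3:prop2} to \cref{sec3:basic-semi}, only now applied on both sides. Concretely, write $T_1=\bigcup_{i=1}^a S_i$ and $T_2=\bigcup_{j=1}^b S'_j$. For each fixed $j$, consider the pair $(T_1, S'_j)$: since $\tilde T_1\cap\tilde T_2=\emptyset$ and $\tilde T_2=\bigcup_j \tilde S'_j$, we have $\tilde T_1\cap \tilde S'_j=\emptyset$, so \cref{sec3:basic-semi} applies and yields a polynomial $g_j\in\RR[x_0,\xx]$ with $g_j>0$ on $\tilde T_1$ and $-g_j>0$ on $\tilde S'_j$. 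The task is then to amalgamate the $g_j$ into a single polynomial $g$ that is positive on all of $\tilde T_1$ and negative on all of $\tilde T_2=\bigcup_j \tilde S'_j$, and finally dehomogenize to obtain $h(\xx)$ of the form \cref{eq:h} exactly as in the proof of \cref{sec3:prop2}.

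For the amalgamation step I would use a product/convex-combination trick on the compact sets $\tilde T_1$ and $\tilde S'_j$ (recall each $\tilde S_i,\tilde S'_j$ lies on the unit sphere, hence is compact, and all the $g_j$ are continuous, hence bounded on the relevant compacta). One clean option: since each $g_j$ is bounded below by some $\varepsilon_j>0$ on the compact set $\tilde T_1$ and above by $-\delta_j<0$ on $\tilde S'_j$, after rescaling we may assume $g_j>1$ on $\tilde T_1$ and $g_j<-1$ on $\tilde S'_j$ while $|g_j|$ is bounded by some constant $M$ on $\tilde T_2$. Then a suitable linear combination, or a shifted product such as $g\coloneqq \sum_{j=1}^b \prod_{k\neq j}(g_k + c)$ for a large enough constant $c$, or simply an argument that $-\prod_j(-g_j)$ works after controlling signs, can be made positive on $\tilde T_1$ (where every factor is large and positive) and negative on each $\tilde S'_j$ (where the $j$-th contribution dominates). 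Another route, paralleling the way \cref{sec3:basic-semi} itself is presumably proved, is to invoke \cref{sec3:basic-semi} once more with the roles swapped: apply it to the pair $(T_2, S_i)$ for each $i$ if needed, but since the statement we want is symmetric it is cleanest to just combine the $b$ witnesses $g_j$ directly.

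Once a single polynomial $g\in\RR[x_0,\xx]$ is obtained with $g>0$ on $\tilde T_1$ and $-g>0$ on $\tilde T_2$, the dehomogenization is verbatim the last part of the proof of \cref{sec3:prop2}: set
\[
  h(\xx)\coloneqq\bigl(\sqrt{\|\xx\|^2+1}\bigr)^{\deg(g)}\, g\!\left(\tfrac{1}{\sqrt{\|\xx\|^2+1}},\tfrac{\xx}{\sqrt{\|\xx\|^2+1}}\right),
\]
use \cref{property:hom} to see that $\xx\in T_1$ (resp.\ $T_2$) maps to a point of $\tilde T_1$ (resp.\ $\tilde T_2$), use $(\sqrt{\|\xx\|^2+1})^{\deg g}\ge 1$ to transfer strict positivity/negativity, and finally expand and collect terms in powers of $\sqrt{\|\xx\|^2+1}$, reducing $(\sqrt{\|\xx\|^2+1})^2=\|\xx\|^2+1$, to write $h=h_1(\xx)+\sqrt{\|\xx\|^2+1}\,h_2(\xx)$ with $h_1,h_2\in\RR[\xx]$, which is the form \cref{eq:h}.

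The main obstacle is the amalgamation step: producing one polynomial separating $\tilde T_1$ from a union $\bigcup_j\tilde S'_j$ out of the pairwise separators $g_j$, while keeping everything polynomial (no min/max allowed) and preserving strict inequalities. Compactness of all the sets involved (they sit on the unit sphere) is what makes this tractable — it gives uniform positive lower bounds and finite upper bounds on the continuous functions $g_j$, so a large-constant product or a carefully weighted sum does the job; the bookkeeping of constants is routine but must be done with care so that the dominating term genuinely wins on each piece. Everything else — the reduction to \cref{sec3:basic-semi}, the dehomogenization, and the verification of the form \cref{eq:h} — is a direct reprise of arguments already carried out for \cref{sec3:prop2} and \cref{sec3:basic-semi}.
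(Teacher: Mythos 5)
Your proposal follows the paper's proof at the structural level exactly: apply \cref{sec3:basic-semi} pairwise to $(T_1, S'_j)$ for $j=1,\dots,b$ to obtain separators $g_j$, amalgamate them into a single separator using compactness of $\tilde T_1$ (the paper delegates this step to \cite[Lem.~3]{gan2020nonlinear}), and then dehomogenize verbatim as in \cref{sec3:prop2}. So the route is the same, and you correctly identify compactness as the crux of the combination step.

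However, the two concrete amalgamation candidates you float are not correct as written, and this is precisely where your ``bookkeeping of constants'' caveat hides a real issue: on $\tilde S'_k$ one controls the sign of $g_k$ (it is $<0$) but the signs of $g_j$ for $j\neq k$ are completely uncontrolled. Consequently $-\prod_j(-g_j)$ has indeterminate sign on each $\tilde S'_k$ (and even the wrong sign on $\tilde T_1$ when $b$ is even), while $\sum_{j}\prod_{k\neq j}(g_k + c)$ with $c>M$ is a sum of products of strictly positive factors, hence strictly positive everywhere on the sphere — including on $\tilde T_2$. A construction that does work in your normalization ($g_j>1$ on $\tilde T_1$, $g_j<-1$ on $\tilde S'_j$, $|g_j|\le M$ on the whole sphere, $M>1$) is a high odd power:
\[
g \;:=\; b-\sum_{j=1}^{b}\Bigl(\tfrac{M-g_j}{M-1}\Bigr)^{2k+1}.
\]
On $\tilde T_1$ each summand lies in $[0,1)$, so $g>0$; on $\tilde S'_l$ every summand is nonnegative and the $l$-th one is at least $\bigl(\tfrac{M+1}{M-1}\bigr)^{2k+1}$, which exceeds $b$ for $k$ large, so $g<0$. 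With this (or any correct version of the lemma you cite in spirit), the rest of your argument — dehomogenizing $g$ via \cref{property:hom} and collecting terms into $h_1(\xx)+\sqrt{\|\xx\|^2+1}\,h_2(\xx)$ — goes through unchanged and matches the paper.
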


Similarly to \cref{sec3:lm1}, the condition $\tilde{T}_1\cap \tilde{T_2}=\emptyset$ can be verified by checking whether $T_1\cap T_2=\emptyset$ and $T^{(\infty)}_1\cap T^{(\infty)}_2=\emptyset$. 
As a direct inference of \cref{sec3:thm1}, we know that there exists a semialgebraic function $h(\xx)$ of the form in \cref{eq:h} such that $h(\xx)>0$ is an interpolant of $\phi(\xx)$ and $\psi(\xx)$.

\subsection{Interpolant between $\phi(\xx,\yy)$ and $\psi(\xx,\zz)$}
\label{ssec:3.2}

Let $\phi(\xx,\yy)$ and $\psi(\xx,\zz)$ be given in \cref{problem:1}.
We denote by $T_\phi\subseteq \RR^{r_1+r_2}$ and $T_\psi\subseteq \RR^{r_1+r_3}$ the semialgebraic sets corresponding to $\phi$ and $\psi$, i.e., 
\vspace{-2mm}
\begin{align}
    T_{\phi} &\coloneqq \bigcup_{k=1}^{K_\phi} S_k, \text{ with }S_k\coloneqq \{(\xx, \yy)\in\RR^{r_1+r_2} \mid \bigwedge_{i=1}^{m_k} f_{k,i}(\xx,\yy) \ge 0 \}, \label{eq:phi-T}\\
    T_{\psi} &\coloneqq \bigcup_{k'=1}^{K_\psi} S'_{k'}, \text{ with }S'_{k'}\coloneqq \{(\xx, \zz) \in\RR^{r_1+r_3}\mid \bigwedge_{j=1}^{n_{k'}} g_{k',j}(\xx,\zz) \ge 0 \}. \label{eq:psi-T}
\end{align}
\vspace{-2mm}

Since an interpolant contains only common symbols of $\phi$ and $\psi$, \cref{problem:1} can be reduced to finding a function $h(\xx)$ such that $h(\xx)=0$ separates the two projection sets $P_{\xx}(T_\phi)\coloneqq \{\xx \in \RR^{r_1}\mid \exists \yy.~(\xx,\yy)\in T_{\phi}\}$ and 
$P_{\xx}(T_\psi)\coloneqq \{\xx \in \RR^{r_1} \mid \exists \zz.~(\xx,\zz)\in T_{\psi}\}$.
We have the following theorem as a direct consequence  of \cref{sec3:thm1}.

\begin{theorem} \label{thm:main}
Let $\phi(\xx,\yy)$ and $\psi(\xx,\zz)$ be defined in \cref{problem:1}, and let $P_{\xx}(T_{\phi})$ and $P_{\xx}(T_\psi)$ be defined above.
Let $T_1=\cl(P_{\xx}(T_{\phi}))$ and $T_2=\cl(P_{\xx}(T_\psi))$.
Assume $\tilde{T}_1\cap \tilde{T}_2=\emptyset$.
Then there exists a semialgebraic function $h(\xx)$ of the form in \cref{eq:h} such that
\begin{equation}
    \forall \xx\in P_{\xx}(T_{\phi}).~h(\xx)>0 \text{ and }\forall \xx\in P_{\xx}(T_{\psi}).~-h(\xx)>0.
\end{equation}
As a consequence, $h(\xx)>0$ is a semialgebraic interpolant of $\phi(\xx,\yy)$ and $\psi(\xx,\zz)$.
\end{theorem}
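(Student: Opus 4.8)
The plan is to derive \cref{thm:main} directly from \cref{sec3:thm1} by showing that a function separating the closures $T_1=\cl(P_{\xx}(T_\phi))$ and $T_2=\cl(P_{\xx}(T_\psi))$ also separates the projections themselves, and then verifying that this yields an interpolant in the sense of \cref{interpolant}. The first task is to observe that $T_1$ and $T_2$ are \emph{closed semialgebraic sets}: the projections $P_{\xx}(T_\phi)$ and $P_{\xx}(T_\psi)$ are semialgebraic by the Tarski--Seidenberg theorem (projection of a semialgebraic set is semialgebraic), and taking closures keeps them semialgebraic while making them closed, so each can be written as a finite union of closed basic semialgebraic sets as required in the hypothesis of \cref{sec3:thm1}. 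With the standing assumption $\tilde{T}_1\cap\tilde{T}_2=\emptyset$, \cref{sec3:thm1} then directly provides a semialgebraic function $h(\xx)$ of the form \cref{eq:h} with $h(\xx)>0$ on all of $T_1$ and $-h(\xx)>0$ on all of $T_2$.

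Next I would transfer this separation back to the projections. Since $P_{\xx}(T_\phi)\subseteq \cl(P_{\xx}(T_\phi))=T_1$ and likewise $P_{\xx}(T_\psi)\subseteq T_2$, the inequalities $h>0$ on $T_1$ and $-h>0$ on $T_2$ restrict immediately to give $h(\xx)>0$ for all $\xx\in P_{\xx}(T_\phi)$ and $-h(\xx)>0$ for all $\xx\in P_{\xx}(T_\psi)$. This is the displayed conclusion of the theorem.

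Finally I would check that $h(\xx)>0$ is genuinely an interpolant of $\phi(\xx,\yy)$ and $\psi(\xx,\zz)$ per \cref{interpolant}. For condition (1), $\phi\models h(\xx)>0$: if $(\xx,\yy)$ satisfies $\phi$, then $(\xx,\yy)\in T_\phi$, so $\xx\in P_{\xx}(T_\phi)$, whence $h(\xx)>0$. For condition (2), $(h(\xx)>0)\wedge\psi\models\bot$: if $(\xx,\zz)$ satisfies $\psi$ then $\xx\in P_{\xx}(T_\psi)$, so $-h(\xx)>0$, i.e. $h(\xx)<0$, contradicting $h(\xx)>0$; hence the conjunction is unsatisfiable. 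Condition (3) holds because $h$ is a function of $\xx$ alone, which is exactly the vector of shared variables of $\phi$ and $\psi$. This completes the argument.

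The only real content beyond bookkeeping is the first step — knowing that projections of semialgebraic sets are semialgebraic (Tarski--Seidenberg) and that their closures are again finite unions of closed basic semialgebraic sets, so that \cref{sec3:thm1} applies. The rest is a routine unwinding of definitions. I would also remark that, just as after \cref{sec3:thm1}, the hypothesis $\tilde{T}_1\cap\tilde{T}_2=\emptyset$ can in practice be discharged by checking $T_1\cap T_2=\emptyset$ together with $T_1^{(\infty)}\cap T_2^{(\infty)}=\emptyset$, via \cref{sec3:lm1} applied componentwise to the basic pieces of $T_1$ and $T_2$.
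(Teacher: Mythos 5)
Your proposal is correct and follows essentially the same route as the paper: invoke the Tarski--Seidenberg theorem to conclude that the projections are semialgebraic, take closures so that \cref{sec3:thm1} applies, and then restrict the resulting separating function back to the projections. The extra steps you spell out (the inclusion of a set in its closure, and the check of conditions (1)--(3) of \cref{interpolant}) are the bookkeeping the paper leaves implicit.
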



\begin{remark}
Note that in \cref{thm:main}, we need to consider the closures $\cl(P_{\xx}(T_{\phi}))$ and $\cl(P_{\xx}(T_\psi))$ rather than
$P_{\xx}(T_{\phi})$ and $P_{\xx}(T_\psi)$ themselves. 
The reason lies in that the projections of closed semialgebraic sets are not necessarily closed.
For example, consider $\phi(\xx,\yy)\coloneqq x_1x_2-1\ge 0\wedge x_2\ge 0$ with $\xx=x_1$ and $\yy=x_2$. Then $P_{\xx}(T_{\phi}) = \{x_1\mid x_1>0\}$ is an open set.
\end{remark}

\vspace{-3mm}
\section{Sum-of-Squares Formulation}
\label{sec:sos}
\vspace{-2mm}
In this section, we provide SOS characterizations for polynomial and semialgebraic interpolants.
For simplicity, we will focus on the case where $\phi$ and $\psi$ are conjunctions of polynomial inequalities given by 
\begin{equation} \label{eq:phi-psi-basic}
    \phi(\xx,\yy) \coloneqq \bigwedge_{i=1}^m f_i(\xx,\yy) \ge 0\text{ and }
    \psi(\xx,\zz) \coloneqq \bigwedge_{j=1}^n g_j(\xx,\zz) \ge 0,
\end{equation}
where $\xx\in \RR^{r_1}$, $\yy\in \RR^{r_2}$, and $\zz\in \RR^{r_3}$.
Extending to the general case is straightforward.

\subsection{SOS Characterization for Polynomial Interpolants}\label{sec:4.1}
In this part, we provide an SOS characterization for polynomial interpolants based on homogenization.
We prove that the characterization is sound and weakly complete. 
Furthermore, we provide a concrete example to show that our new characterization is strictly more expressive than the one in \cite{gan2020nonlinear}.

\begin{theorem}[Weak Completeness]\label{thm:sos-formulation} 
Let $\phi$, $\psi$ be defined as in \cref{eq:phi-psi-basic} and let $S_\phi$, and $S_\psi$ be the basic semialgebraic sets corresponding to $\phi$ and $\psi$.
Let $\tilde{f}_{m+1}= x_0$, $\tilde{g}_{n+1}= x_0$, $\tilde{f}_{m+2}= x_0^2+\|\xx\|^2+\|\yy\|^2-1$, and $\tilde{g}_{n+2}= x_0^2+\|\xx\|^2+\|\zz\|^2-1$.
If $h(\xx)\in \RR[\xx]$ is a polynomial interpolant function of $\phi$ and $\psi$, 
then the homogenized polynomial $\tilde{h}(x_0, \xx)$ satisfies, for arbitrarily small $\epsilon>0$,
\vspace{-2mm}
\begin{equation}\label{eq:sos-formulation}
\begin{aligned}
    \tilde{h}(x_0,\xx) + \epsilon
    &~=~\sigma_0+\sum_{i=1}^{m+2}\sigma_i\tilde{f}_i(x_0,\xx,\yy), \\
    -\tilde{h}(x_0,\xx) + \epsilon &~=~\tau_0+\sum_{j=1}^{n+2}\tau_j\tilde{g}_j(x_0,\xx,\zz),    
\end{aligned}
\end{equation}
for some $\sigma_i\in\mathrm{\Sigma}[x_0,\xx,\yy]$, $i=0,\ldots,m+1$, $\sigma_{m+2}\in\RR[x_0,\xx,\yy]$, $\tau_i\in\mathrm{\Sigma}[x_0,\xx,\zz]$, $i=0,\ldots,n+1$, $\tau_{n+2}\in\RR[x_0,\xx,\zz]$.
\end{theorem}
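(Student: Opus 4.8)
The plan is to certify the two SOS identities in \cref{eq:sos-formulation} by lifting $h$ and the basic sets $S_\phi,S_\psi$ to the homogeneous space, where the relevant constraint set becomes compact and its quadratic module Archimedean, and then invoking Putinar's Positivstellensatz (\cref{thm:putinar}). By symmetry I will only discuss the identity coming from $\phi$; the one coming from $\psi$ is obtained verbatim by running the same argument for $-h$, which is strictly positive on $S_\psi$ since $h$ is an interpolant function.

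First I would set up the homogenized object. Write $d_h:=\deg h$, so that $\tilde h(x_0,\xx)=x_0^{d_h}h(x_1/x_0,\dots,x_{r_1}/x_0)$, and let $\tilde S_\phi$ be the homogenization of $S_\phi$ (in the variables $(\xx,\yy)$) in the sense of \cref{eq:tS}: the subset of $\RR^{1+r_1+r_2}$ cut out by $\tilde f_1\ge0,\dots,\tilde f_m\ge0$, by $\tilde f_{m+1}=x_0\ge0$, and by the \emph{equality} $\tilde f_{m+2}=x_0^2+\|\xx\|^2+\|\yy\|^2-1=0$. This set is compact, and the quadratic module generated by $\tilde f_1,\dots,\tilde f_{m+1}$ and by $\tilde f_{m+2}$ and $-\tilde f_{m+2}$ is Archimedean, because $-\tilde f_{m+2}=1-\|(x_0,\xx,\yy)\|^2$ lies in it; moreover an element of that module is exactly an expression $\sigma_0+\sum_{i=1}^{m+1}\sigma_i\tilde f_i+\sigma_{m+2}\tilde f_{m+2}$ with $\sigma_0,\dots,\sigma_{m+1}$ SOS and $\sigma_{m+2}$ an arbitrary polynomial (split the free multiplier of $\tilde f_{m+2}$ as a difference of two SOS, one multiplying $\tilde f_{m+2}$ and one multiplying $-\tilde f_{m+2}$). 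Hence, by \cref{thm:putinar}, it suffices to prove that for every $\epsilon>0$ the polynomial $\tilde h+\epsilon$ is strictly positive on $\tilde S_\phi$. To see this, note that $h$ being an interpolant function gives $h>0$, hence $h\ge0$, on $S_\phi$, and, as $h$ does not involve $\yy$, this is $h\ge0$ on $S_\phi\subseteq\RR^{r_1+r_2}$; then \cref{sec2:lm}, applied using that $S_\phi$ is closed at $\infty$ (\cref{def:closed}; cf.\ the remark after \cref{problem:1}), yields $\tilde h\ge0$ on all of $\tilde S_\phi$. Since $\tilde S_\phi$ is compact, $\tilde h+\epsilon\ge\epsilon>0$ there, so Putinar produces multipliers $\sigma_i$ of the stated shape; the identity for $-\tilde h+\epsilon$ on the analogous set $\tilde S_\psi$ follows in the same way.

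I expect the main obstacle to be exactly the behaviour of $\tilde h$ on the hyperplane $x_0=0$, i.e.\ at the ``points at infinity'' --- which is also why the result is only \emph{weakly} complete. For $x_0>0$ the sign of $\tilde h$ on $\tilde S_\phi$ is pinned down by $h$ on $S_\phi$ through \cref{property:hom}, but for $x_0=0$ one has $\tilde h(0,\xx)=h^{(\infty)}(\xx)$, and in general $\tilde S_\phi$ can contain ``spurious'' infinity points --- ones that do not arise as limits of the $x_0>0$ part $\tilde S_\phi^h$ of \cref{eq:tSH} --- at which $h^{(\infty)}$ may well be negative; ruling these out is precisely the role of closedness at $\infty$ (equivalently $\tilde S_\phi=\cl(\tilde S_\phi^h)$), and an alternative is to homogenize $h$ to any degree strictly above $\deg h$, which makes $\tilde h$ vanish identically on $\{x_0=0\}$. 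Even with this in hand, $\tilde h$ can still vanish somewhere on $\{x_0=0\}\cap\tilde S_\phi$, so $\tilde h$ itself need not be strictly positive on the compact lifted set --- only $\tilde h+\epsilon$ is --- and this is why the perturbation $\epsilon$ cannot be removed. Everything else (compactness, the Archimedean property, and the rewriting of the equality constraint via a free multiplier) is routine.
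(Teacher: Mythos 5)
Your proof takes essentially the same route as the paper: lift $h$, $S_\phi$, and $S_\psi$ to the homogeneous sphere, use \cref{sec2:lm} to transfer nonnegativity of $h$ on $S_\phi$ (resp.\ $S_\psi$) to nonnegativity of $\tilde h$ (resp.\ $-\tilde h$) on the lifted Archimedean set, and invoke \cref{thm:putinar} after an $\epsilon$-perturbation to secure strict positivity; the paper's three-line proof is exactly the skeleton of your argument, and your remarks on how the free multiplier of $\tilde f_{m+2}$ encodes the sphere equality are just the standard unpacking of the quadratic module for an equality constraint. The one place where you are more explicit than the paper --- and rightly so --- is the closedness-at-infinity hypothesis: \cref{sec2:lm} gives $\tilde h\ge0$ on all of $\tilde S_\phi$ (not merely on $\cl(\tilde S_\phi^h)$) only when $S_\phi$ is closed at $\infty$, an assumption the paper's proof uses implicitly but the theorem statement omits; your observation about spurious points at $x_0=0$ where $h^{(\infty)}$ could go negative, and the alternative fix of over-homogenizing $h$ so that $\tilde h$ vanishes identically on $\{x_0=0\}$, are a genuine and correct tightening rather than a different approach.
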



\begin{remark}\label{remark:convergence}
    In \cref{eq:sos-formulation}, we add a small quantity $\epsilon>0$ to the left-hand sides in order to invoke \cref{thm:putinar}. 
    The ideal case is $\epsilon=0$. 
    Fortunately, in most practice circumstances, we can safely set $\epsilon=0$ when the \emph{finite convergence property} \cite[Thm.~1.1]{nie2014optimality} holds. Indeed, the finite convergence property is generically true and is violated only when $h$ and $S_{\phi}$ (or $S_{\psi}$) are of certain singular forms.

\end{remark}

\begin{theorem}[Soundness]\label{thm:sound}
    Let $\phi$ and $\psi$ be defined as in \cref{eq:phi-psi-basic}. Suppose that $h(\xx)$ is a polynomial such that its homogenization $\tilde{h}(x_0,\xx)$ satisfies \cref{eq:sos-formulation} with $\epsilon=0$.
    Assume $\phi\wedge h(\xx)=0\models \bot$ and $\psi \wedge h(\xx)=0\models \bot$. Then $h(\xx)$ is an interpolant function of $\phi$ and $\psi$.
\end{theorem}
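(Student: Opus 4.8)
The plan is to instantiate the two polynomial identities in \cref{eq:sos-formulation} (with $\epsilon=0$) at the sphere points furnished by \cref{property:hom}, and read off the sign of $h$ from the nonnegativity of the right-hand sides.

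First I would fix an arbitrary $(\xx,\yy)$ satisfying $\phi$, i.e.\ $f_i(\xx,\yy)\ge 0$ for $i=1,\dots,m$, and set $w\coloneqq\sqrt{1+\|\xx\|^2+\|\yy\|^2}>0$. Evaluating the first identity of \cref{eq:sos-formulation} at $(x_0,\xx,\yy)=(w^{-1},w^{-1}\xx,w^{-1}\yy)$, I claim every summand on the right is nonnegative there: $\sigma_0\ge 0$ and $\sigma_i\ge 0$ for $i\le m+1$ since they are SOS; $\tilde{f}_i(w^{-1},w^{-1}\xx,w^{-1}\yy)=w^{-\deg f_i}f_i(\xx,\yy)\ge 0$ for $i\le m$ by the definition of homogenization in \cref{eq:hom:var}; $\tilde{f}_{m+1}=x_0=w^{-1}>0$; and $\tilde{f}_{m+2}=x_0^2+\|\xx\|^2+\|\yy\|^2-1$ vanishes at this point, so the (possibly indefinite) term $\sigma_{m+2}\tilde{f}_{m+2}$ contributes $0$. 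Hence the left-hand side $\tilde{h}(w^{-1},w^{-1}\xx)\ge 0$. Since $\tilde{h}(x_0,\xx)=x_0^{\deg h}h(\xx/x_0)$, this equals $w^{-\deg h}h(\xx)$, and $w>0$ forces $h(\xx)\ge 0$. The hypothesis $\phi\wedge h(\xx)=0\models\bot$ rules out $h(\xx)=0$ on $\phi$, so in fact $\phi\models h(\xx)>0$.

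A mirror-image computation handles $\psi$: for $(\xx,\zz)$ satisfying $\psi$, put $w'\coloneqq\sqrt{1+\|\xx\|^2+\|\zz\|^2}$ and evaluate the second identity of \cref{eq:sos-formulation} at $(w'^{-1},w'^{-1}\xx,w'^{-1}\zz)$; the SOS multipliers $\tau_j$ for $j\le n+1$, the positivity of $\tilde{g}_{n+1}=x_0$, and the vanishing of $\tilde{g}_{n+2}$ at this point give $-\tilde{h}(w'^{-1},w'^{-1}\xx)\ge 0$, i.e.\ $-h(\xx)\ge 0$ on $\psi$; then $\psi\wedge h(\xx)=0\models\bot$ upgrades this to $\psi\models h(\xx)<0$. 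Finally $h\in\RR[\xx]$ contains only the shared variables $\xx$, so all three requirements of \cref{interpolant} are met and $h$ is an interpolant function of $\phi$ and $\psi$.

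There is no deep obstacle here; the argument is a direct substitution of a sphere point into the identities. The one subtlety worth stressing is the treatment of the non-SOS multipliers $\sigma_{m+2}$ and $\tau_{n+2}$: they carry no definite sign, but they multiply the sphere-equation polynomials $\tilde{f}_{m+2}$ and $\tilde{g}_{n+2}$, which are identically zero at the chosen evaluation points, so those terms drop out harmlessly. The other routine point is the identity $\tilde{h}(w^{-1},w^{-1}\xx)=w^{-\deg h}h(\xx)$ relating the homogenization back to $h$ on the slice $x_0>0$ of the unit sphere.
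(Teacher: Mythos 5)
Your proposal is correct and takes essentially the same route as the paper: the paper's proof invokes the SOS identity to conclude $\tilde h\ge 0$ on $\tilde S_\phi$ and $\tilde h\le 0$ on $\tilde S_\psi$, then uses \cref{property:hom} to pull these back to $h\ge 0$ on $S_\phi$ and $h\le 0$ on $S_\psi$, with the non-attainability hypotheses supplying strictness. Your write-up merely unpacks the two steps the paper compresses — explicitly evaluating the identity at the normalized sphere point, noting the sphere-equation term with the non-SOS multiplier vanishes, and working out the scaling $\tilde h(w^{-1},w^{-1}\xx)=w^{-\deg h}h(\xx)$ — which is exactly the content behind the paper's appeal to \cref{property:hom}.
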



Now we compare our characterization \cref{eq:sos-formulation} with \cite[Thm.~5]{gan2020nonlinear} which states that if $S_{\phi}$ and $S_\psi$ are of the Archimedean form, then a polynomial interpolant function $h(\xx)$ can be expressed as  
\begin{equation}\label{eq:sos-formulation-20}
    \begin{aligned}
        h(\xx) - 1  &= \sigma_0 + \sum_{i=0}^{m} \sigma_i f_i(\xx,\yy),\\ 
        - h(\xx) - 1  &= \tau_0 + \sum_{j=0}^{n} \tau_j g_j(\xx,\zz),\\ 
    \end{aligned}
\end{equation}
for some $\sigma_i \in \mathrm{\Sigma}[\xx,\yy]$, $i=0,\dots,m$ and $\tau_j\in \mathrm{\Sigma}[\xx,\zz]$, $j=0,\dots,n$.
Clearly, since \cref{thm:sos-formulation} removes the restriction of the Archimedean condition, our characterization is \emph{strictly} more expressive.

Let $M(x_1,x_2)\coloneqq x_1^4x_2^2+x_1^2x_2^4-3x_1^2x_2^2+1$ be the Motzkin polynomial.
It is well known that $M(x_1,x_2)$ is nonnegative but is not an SOS.

\begin{proposition}\label{prop:motzkin}
    The polynomial $M(x_1,x_2)+1$ is positive but is not an SOS.            
\end{proposition}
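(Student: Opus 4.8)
The plan is to prove the two assertions separately: positivity is immediate, while the failure to be an SOS requires a small argument reducing back to the classical Motzkin fact.

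For positivity, recall that $M(x_1,x_2)$ is nonnegative on $\RR^2$; this is the classical Motzkin example, provable via the arithmetic–geometric mean inequality applied to the three monomials $x_1^4x_2^2$, $x_1^2x_2^4$, and $1$, whose geometric mean is exactly $x_1^2x_2^2$. Hence $M(x_1,x_2)\ge 0$ everywhere, and therefore $M(x_1,x_2)+1\ge 1>0$ for all $(x_1,x_2)\in\RR^2$. This disposes of the first half.

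For the non-SOS part, the strategy is proof by contradiction. Suppose $M(x_1,x_2)+1=\sigma(x_1,x_2)$ for some $\sigma\in\mathrm{\Sigma}[x_1,x_2]$. The key step is to inspect the Newton polytope (equivalently, the degree and the monomials that can appear in a square). Since $M+1$ has degree $6$, each $p_i$ in a representation $\sigma=\sum p_i^2$ has degree at most $3$. One then argues, exactly as in the standard proof that $M$ is not SOS, that the monomials $x_1^3$, $x_2^3$, $x_1^2x_2$, $x_1x_2^2$, and $x_1^2$, $x_2^2$, $x_1$, $x_2$ cannot appear in any $p_i$ (otherwise their squares or cross terms would produce monomials absent from $M+1$, such as $x_1^6$, $x_1^4$, $x_1^2x_2$, etc., which have zero coefficient and cannot be cancelled since all contributions are of fixed sign once the offending monomial appears with nonzero coefficient in some $p_i$). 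This forces each $p_i$ to lie in the span of $\{x_1^2x_2,\,x_1x_2^2,\,1,\,x_1x_2\}$ — wait, one must be careful: the admissible monomial set is $\{x_1^2x_2,\ x_1x_2^2,\ 1\}$ together possibly with $x_1x_2$; comparing the coefficient of $x_1^2x_2^2$ then yields $\sum (\text{coeff of }x_1x_2\text{ in }p_i)^2 - 2\sum(\text{coeff of }1)(\text{coeff of }x_1^2x_2^2\text{-producing pair}) = -3$, which is the same contradiction ($-3<0$ cannot equal a difference that the classical argument shows is $\ge -2\cdot(\text{something})$, ultimately forcing nonnegativity of a quantity that must equal $-3$). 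Concretely, the cleanest route is: the constant-term analysis forces $\sum_i p_i(0,0)^2 = 1$, the $x_1^4x_2^2$ and $x_1^2x_2^4$ coefficients force the $x_1^2x_2$- and $x_1x_2^2$-coefficient vectors to have squared norms $1$ each, and then the $x_1^2x_2^2$-coefficient equation reads (twice an inner product of bounded vectors) $=-3$, contradicting Cauchy–Schwarz which bounds it below by $-2$.

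The main obstacle — really the only one — is carrying out the Newton-polytope bookkeeping cleanly enough that the contradiction ``$-3$ cannot be achieved'' is transparent; but since adding the constant $1$ only changes the coefficient of the monomial $1$ (from $1$ to $2$) and leaves every other coefficient of $M$ untouched, and since the monomial $1$ is already a vertex-adjacent interior point of the relevant Newton polytope, the entire classical argument for $M$ goes through verbatim with the single harmless replacement $1\mapsto 2$ in the constant-term equation. I would therefore simply invoke the standard non-SOS proof for $M$, noting this one-coefficient change does not affect any step, rather than rederiving it. Alternatively, an even shorter argument: if $M+1$ were SOS, then since $M+1 > 0$ is bounded below by $1$, ... — no, that gives nothing; the Newton polytope argument is unavoidable, so I would present the short version above with the Cauchy–Schwarz punchline.
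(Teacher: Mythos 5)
Your approach is the same as the paper's: reduce $p_i$ to the span of $\{1,\,x_1x_2,\,x_1^2x_2,\,x_1x_2^2\}$ via a Newton-polytope/degree argument, then compare the coefficient of $x_1^2x_2^2$. Your concluding observation -- that adding $1$ only changes the constant coefficient from $1$ to $2$, leaves the Newton polytope and the crucial $-3$ untouched, and so the classical Motzkin non-SOS argument carries over verbatim -- is correct and is essentially what the paper does.

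However, the ``cleanest route'' you sketch in the middle contains a concrete error. Once $p_i=d_i+c_i x_1x_2+a_i x_1^2x_2+b_i x_1x_2^2$, the only product of allowed monomials equal to $x_1^2x_2^2$ is $(x_1x_2)\cdot(x_1x_2)$; the pairs $(1,\,x_1^2x_2^2)$, $(x_1^2x_2,\,x_2)$, $(x_1x_2^2,\,x_1)$ would produce it, but none of those partners is available in $p_i$. So the coefficient of $x_1^2x_2^2$ in $\sum_i p_i^2$ is simply $\sum_i c_i^2\ge 0$, and the contradiction with $-3$ is immediate; there is no ``$2\times$ inner product'' term and no Cauchy--Schwarz bound of $-2$ -- you have imported a cross-term structure that does not exist in this monomial bookkeeping. (Also a small slip: for $M+1$ the constant-term equation is $\sum_i d_i^2=2$, not $1$, though you acknowledge this change at the end; and your initial list of excluded monomials mistakenly includes $x_1^2x_2$ and $x_1x_2^2$ before you self-correct.) None of this breaks the proof -- if anything, the correct computation is shorter than what you propose -- but as written the ``cleanest route'' would not survive a careful write-up and should be replaced by the direct nonnegativity argument the paper uses.
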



\begin{example}
    Let $\phi\coloneqq 1\ge 0 (= \top)$ and $\psi\coloneqq -1 \ge 0 (= \bot)$.
    By \cref{prop:motzkin}, the polynomial $M(x_1,x_2)+1$ is an interpolant function of $\phi$ and $\psi$ but does not admit a representation as in \cref{eq:sos-formulation-20}, i.e., the program
    \begin{align*}
        \text{find} \quad &\sigma_0 \in \mathrm{\Sigma}[x_1,x_2]\\
        \rm{s.t.} \quad & x_1^4x_2^2+x_1^2x_2^4-3x_1^2x_2^2+2 =\sigma_0
    \end{align*}
    is not feasible.
    However, a numerical solution to the following program:
    \begin{align*}
        \text{find} \quad &\sigma_0, \sigma_1 \in \mathrm{\Sigma} [x_0,x_1,x_2], \sigma_2\in \RR[x_0,x_1,x_2]\\
        \rm{s.t.} \quad & x_1^4x_2^2 + x_1^2x_2^4 - 3x_1^2x_2^2x_0^2 + 2x_0^6= \sigma_0 + \sigma_1 x_0 + \sigma_2 (1-x_0^2-x_1^2-x_2^2).
    \end{align*}
    can be obtained by employing the Julia package \textsc{TSSOS} \cite{tssos_tool} and the SDP solver \textsc{Mosek} \cite{mosek}.
    Therefore, the polynomial $M(x_1,x_2)+1$ admits a representation as in \cref{eq:sos-formulation} with $\epsilon=0$.
\end{example}

\subsection{SOS Characterization for Semialgebraic Interpolants}
\label{sec:4.2}
Let $h(\xx)$ be a semialgebraic interpolant function of the form in \cref{eq:h} and let $w$ be a fresh variable.
Though $h(\xx)$ is not a polynomial, it can be equivalently represented by a polynomial $l(\xx, w)=h_1(\xx)+w\cdot h_2(\xx)\in \RR[\xx,w]$ with additional polynomial constraints $w^2=1+\|\xx\|^2$ and $w\ge 0$.
Adopting this idea, we have the following completeness theorem.
The soundness result for the semialgebraic case is omitted, as it is essentially the same as \cref{thm:sound}.

\begin{theorem}[Completeness]\label{thm:completeness} 
Let $\phi$, $\psi$ be defined as in \cref{eq:phi-psi-basic} and let $S_\phi$, and $S_\psi$ be the basic semialgebraic sets corresponding to $\phi$ and $\psi$.
Let $S_1 = \cl(P_{\xx}(S_\phi))$, $S_2 = \cl(P_{\xx}(S_\psi))$,
$\tilde{f}_{m+1}=\tilde{g}_{n+1}= x_0$, $\tilde{f}_{m+2}=\tilde{g}_{m+2}=w$, $\tilde{f}_{m+3}= x_0^2+\|\xx\|^2+w^2+\|\yy\|^2-1$, and $\tilde{g}_{n+3}= x_0^2+\|\xx\|^2+w^2+\|\zz\|^2-1$, $\tilde{f}_{m+4}=\tilde{g}_{n+4}= x_0^2+\|\xx\|^2-w^2$.
Assume that the following two conditions hold: 
(1) $\tilde{S}_1\cap \tilde{S_2} = \emptyset$;
(2) $S_\phi$ and $S_\psi$ are closed at $\infty$.
Then there exists a semialgebraic interpolant function $h(\xx)$ of the form in \cref{eq:h} such that the polynomial $l(\xx, w)=h_1(\xx)+w\cdot h_2(\xx)\in \RR[\xx,w]$ satisfies, for arbitrarily small $\epsilon>0$,
\vspace{-2mm}
\begin{equation}\label{eq:complete}
\begin{aligned}
    \tilde{l}(x_0, \xx, w) + \epsilon
    &~=~\sigma_0+\sum_{i=1}^{m+4}\sigma_i\tilde{f}_i(x_0,\xx,\yy), \\
    -\tilde{l}(x_0, \xx, w)  + \epsilon &~=~\tau_0+\sum_{j=1}^{n+4}\tau_j\tilde{g}_j(x_0,\xx,\zz),    
\end{aligned}
\end{equation}
for some $\sigma_i\in\mathrm{\Sigma}[x_0,\xx,\yy,w]$, $i=0,\ldots,m+2$, $\sigma_{m+3},\sigma_{m+4}\in\RR[x_0,\xx,\yy,w]$, $\tau_i\in\mathrm{\Sigma}[x_0,\xx,\zz,w]$, $i=0,\ldots,n+2$, $\tau_{n+3},\tau_{n+4}\in\RR[x_0,\xx,\zz,w]$.
\end{theorem}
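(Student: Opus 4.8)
The plan is to reduce \cref{thm:completeness} to \cref{thm:sos-formulation} by the same polynomial encoding of the square root that is used to define semialgebraic interpolants. First I would invoke \cref{thm:main}: since the hypotheses include $\tilde{S}_1\cap\tilde{S}_2=\emptyset$ with $S_1=\cl(P_{\xx}(S_\phi))$ and $S_2=\cl(P_{\xx}(S_\psi))$, there exists a semialgebraic interpolant function $h(\xx)=h_1(\xx)+\sqrt{\|\xx\|^2+1}\cdot h_2(\xx)$ of the form \cref{eq:h} that strictly separates $P_{\xx}(S_\phi)$ and $P_{\xx}(S_\psi)$. Introduce the fresh variable $w$ and set $l(\xx,w)\coloneqq h_1(\xx)+w\cdot h_2(\xx)\in\RR[\xx,w]$; on the locus $w^2=1+\|\xx\|^2,\ w\ge 0$ we have $l(\xx,w)=h(\xx)$. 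Thus $l$ is a \emph{polynomial} interpolant function for the augmented formulas $\phi'(\xx,\yy,w)\coloneqq \phi(\xx,\yy)\wedge(w^2=1+\|\xx\|^2)\wedge(w\ge 0)$ and $\psi'(\xx,\zz,w)\coloneqq \psi(\xx,\zz)\wedge(w^2=1+\|\xx\|^2)\wedge(w\ge 0)$, the point being that $l>0$ on the semialgebraic set defined by $\phi'$ and $l<0$ on that defined by $\psi'$.

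Next I would apply \cref{thm:sos-formulation} to $\phi'$ and $\psi'$, whose defining polynomials are $f_1,\dots,f_m$ together with $w^2-\|\xx\|^2-1$ (as a pair of inequalities $\ge 0$ and $\le 0$) and $w$, and similarly on the $\psi$ side. Homogenizing with the variable $x_0$ turns $w$ into $\tilde f_{m+2}=w$ (already homogeneous of degree one), turns the sphere constraint into $\tilde f_{m+3}=x_0^2+\|\xx\|^2+w^2+\|\yy\|^2-1$ (the homogenized Archimedean ball constraint on the enlarged variable tuple), and turns the equality $w^2=\|\xx\|^2+1$ into $\tilde f_{m+4}=x_0^2+\|\xx\|^2-w^2$ — which is exactly the homogenization of $\|\xx\|^2+1-w^2$, entering as an equality constraint so that $\sigma_{m+4}$ is an unconstrained polynomial multiplier. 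The homogenized interpolant is $\tilde l(x_0,\xx,w)$, and \cref{thm:sos-formulation} yields precisely the representation \cref{eq:complete} with the stated sum-of-squares and free-polynomial multipliers, for arbitrarily small $\epsilon>0$.

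The one point needing care, and the step I expect to be the main obstacle, is checking that the augmented sets genuinely satisfy the hypotheses that make \cref{thm:sos-formulation} applicable — in particular that after adjoining the constraints $w\ge 0$ and $w^2=\|\xx\|^2+1$ and passing to the homogenization, the relevant homogenized sets are still disjoint and closed at $\infty$. This is where condition (2), that $S_\phi$ and $S_\psi$ are closed at $\infty$, does its work, via \cref{sec2:lm} and \cref{def:closed}: closedness at infinity guarantees that $\tilde f\ge 0$ on $\tilde S$ can be read off from $f\ge 0$ on $S$, so that the homogenized augmented sets coincide with the closures of their positive-$x_0$ parts and no spurious points at infinity appear to break disjointness; combined with condition (1) this transfers to the graph encoding because the extra coordinate $w$ is pinned down continuously by $\xx$. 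I would also verify the mild bookkeeping that $\tilde f_{m+3}$ (resp.\ $\tilde g_{n+3}$) indeed makes the corresponding quadratic module Archimedean on the enlarged variable set, so that \cref{thm:putinar} — the engine behind \cref{thm:sos-formulation} — applies. Once these checks are in place, the SOS identities \cref{eq:complete} follow directly, and the soundness direction (that any such $l$ yields a genuine interpolant $h$) is identical to \cref{thm:sound} and is omitted as the paper notes.
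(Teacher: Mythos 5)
Your proposal is correct and follows essentially the same route as the paper: invoke \cref{thm:main} via condition (1) to obtain the semialgebraic $h$, encode it as the polynomial $l(\xx,w)$ on the locus $w^2=1+\|\xx\|^2,\ w\ge 0$, then homogenize and appeal to Putinar's Positivstellensatz on the Archimedean set $\tilde S_w$, with condition (2) feeding into \cref{sec2:lm}. The only cosmetic difference is that you package the final step as an appeal to \cref{thm:sos-formulation} applied to the augmented formulas $\phi',\psi'$, whereas the paper unfolds that step and invokes \cref{sec2:lm} and \cref{thm:putinar} directly on $\tilde S_w$ — the same two ingredients.
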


We want to emphasize that \cref{thm:completeness} is a stronger result than \cref{thm:sos-formulation}, in the sense that \cref{thm:completeness} guarantees the existence of a semialgebraic interpolant (as per \cref{thm:main}), which is not the case for polynomial interpolants in \cref{thm:sos-formulation}.


\section{Synthesizing Interpolant via SOS Programming}
\label{sec:5}
\vspace{-2mm}
In this section, we propose an SOS programming procedure to synthesize polynomial and semialgebraic interpolants.
Concrete examples are provided to demonstrate the effectiveness and efficiency of our method.
For all examples, existing approaches \cite{DXZ13,GDX16,Gao14} are not applicable due to their restrictions on formulas, and the method in \cite{gan2020nonlinear} also fails to produce interpolants of specified degrees.
All experiments were conducted on a Mac lap-top with Apple M2 chip and 8GB memory.
We use the Julia package \textsc{TSSOS} \cite{tssos_tool} to formulate SOS programs and rely on the SDP solver \textsc{Mosek} \cite{mosek} to solve them.
All numerical results are symbolically verified using \textsc{Mathematica} to be real interpolants and the portraits can be found in \cref{app:fig}. 
The scripts are publicly available
\footnote{\url{https://github.com/EcstasyH/Interpolation}.}.


\textbf{Synthesizing Polynomial Interpolants}: Let $T_\phi$, $T_\psi$, $S_k$, and $S_{k'}$ be defined as in \cref{eq:phi-T} and \cref{eq:psi-T}. 
By treating $S_k$ and $S_k'$ respectively as $S_\phi$ and $S_\psi$ in \cref{thm:sos-formulation},
the problem of synthesizing a polynomial interpolant for $\phi$ and $\psi$ is reduced to solving the following SOS program:
\begin{equation} \label{eq:sos-program-general}
  \left\{
    \begin{aligned}\allowdisplaybreaks
    \text{find} \quad & h(\xx)\\
    \rm{s.t.} \quad 
    &\tilde{h}(x_0,\xx)=\sigma_{k,0}+\sum_{i=1}^{m_k+2} \sigma_{k,i} \tilde{f}_{k,i} \quad \text{for }k=1,\dots,K_{\phi},\\
    &-\tilde{h}(x_0,\xx)=\tau_{k',0}+\sum_{j=1}^{n_{k'}+2} \tau_{k',j} \tilde{g}_{k',j} \quad \text{for }k'=1,\dots,K_{\psi},\\
    &\sigma_{k,0},...,\sigma_{k,m+1}\in \mathrm{\Sigma}[x_0,\xx,\yy], \sigma_{k,m+2}\in \RR[x_0,\xx,\yy], \\
    & \qquad  \text{for }k=1,\dots,K_{\phi},\\
    &\tau_{k',0},...,\tau_{k',n+1}\in \mathrm{\Sigma}[x_0,\xx,\zz], 
    \tau_{k',n+2}\in \RR[x_0,\xx,\yy],\\
    & \qquad \text{for }k=1,\dots,K_{\psi},
  \end{aligned}
  \right.
\end{equation}
where $\tilde{f}_{k,m+1}=\tilde{g}_{k',n+1}=x_0$, 
$\tilde{f}_{k,m+2} =x_0^2+\|\xx\|^2+\|\yy\|^2-1$, $\tilde{g}_{k',n+2}=x_0^2+\|\xx\|^2+\|\zz\|^2-1$ 
for $k=1,\dots,K_{\phi}$ and $k'=1,\dots,K_{\psi}$.

As \cref{thm:sound} suggests, a solution $h(\xx)$ to the above program only ensures that $\phi\models h(\xx)\ge 0$ and $\psi \models -h(\xx)\le 0$.
Nevertheless, since numerical solvers are unable to distinguish $\ge$ from $>$, the equalities are usually not attainable for a numerical solution\footnote{For example, SDP solvers based on interior-point methods typically return strictly feasible solutions.}.
Therefore, we can view the SOS program \cref{eq:sos-program-general} as a sound approach for computing $h(\xx)$, while completeness follows from verifying the conditions discussed in \cref{remark:convergence}.
 
In practice, we solve the program \cref{eq:sos-program-general} by solving a sequence of SDP relaxations which are obtained by restricting the highest degree of involved polynomials.
Concretely speaking, suppose that we would like to find a polynomial interpolant function $h(\xx)$ of degree~$d$, we set the template of $h(\xx)$ to be $h(\xx)= \sum_{|\ba|\le d} c_{\ba} \xx^{\ba}$,
where $\ba=(\alpha_1, ..., \alpha_{r_1})\in\NN^{r_1}$, $|\ba|=\alpha_1+ \cdots \alpha_{r_1}$, and $c_{\ba}\in \RR$ are coefficients to be determined.
Then, the homogenization of $h(\xx)$ is $\tilde{h}(x_0,\xx)= \sum_{|\ba|\le d} c_{\ba} x_0^{d-|\ba|} \xx^{\ba}$.

Given a relaxation order $s\in \NN$ with $2s\ge d$, we set the degrees of the remaining unknown polynomials $\sigma_i,\tau_j$ appropriately to ensure that the maximum degree of polynomials involved in \cref{eq:sos-program-general} equals $2s$.
We refer to the resulting program as the $s$-th relaxation of \cref{eq:sos-program-general}, which can be translated into an SDP and can be numerically solved in polynomial time.
If the $s$-th relaxation is solvable, it yields a solution $h(\xx)$ that serves as a polynomial interpolant function of $\phi$ and $\psi$.
If it is not solvable, we then increase the relaxation order $s$ to obtain a tighter relaxation, or alternatively, we can increase the degree $d$ of $h(\xx)$ to search for interpolants of higher degree.

\begin{example}[adapted from \cite{chen19cade}]\label{ex:2}
    Let $\xx=(x,y)$ and $\yy=\zz=\emptyset$, i.e., there is no uncommon variables.
    We define the following polynomials:
    \begin{align*}
        &f_1 = 11 - x^4 + 0.1y^4,\quad  &&f_2 = y^3,\\
        &f_3 = 0.9025 - (x-1)^4 - y^4,\quad &&f_4 = (x-1)^4 + y^4 - 0.09,\\
        &f_5 = (x+1)^4 + y^4 - 1.1025,\quad &&f_6 = 0.04 - (x+1)^4 - y^4,\\
        &g_1 = 11 - x^4 + 0.1y^4,\quad &&g_2 = -y^3,\\
        &g_3 = 0.9025 - (x+1)^4 - y^4, \quad &&g_4 = (x+1)^4 + y^4 - 0.09,\\
        &g_5 = (x-1)^4 + y^4 - 1.1025, \quad &&g_6 = 0.04 - (x-1)^4 - y^4.
    \end{align*}
    Let $\phi$ and $\psi$ be defined by
    \begin{align*}
        &\phi \coloneqq (f_1\ge0 \wedge f_2\ge 0 \wedge f_4\ge 0\wedge f_5\ge 0) \vee (f_3\ge0 \wedge f_4\ge 0 \wedge f_5\ge 0) \vee (f_6\ge 0),\\ 
        &\psi \coloneqq (g_1\ge 0 \wedge g_2\ge 0 \wedge g_4 \ge 0 \wedge g_5\ge 0) \vee (g_3\ge 0 \wedge g_4\ge 0 \wedge g_5 \ge 0) \vee (g_6\ge 0).
    \end{align*}
    Set the degree of the polynomial interpolation function $h(x,y)$ to $7$.
    It takes 0.16 seconds to solve the $4$-th relaxation \cref{eq:sos-program-general}, yielding the solution 
    \begin{equation*}
        h(x,y) =  - 0.00153942 y + 0.03053692x  + \cdots  + 0.06109453 x^6 y + 0.01643640 x^7,
    \end{equation*}
    where the coefficients have been scaled so that the largest absolute value is $1$.
\end{example}


\textbf{Synthesizing Semialgebraic Interpolants}:
Similarly, the synthesis of a semialgebraic interpolant is reduced to solving the following SOS program:
\begin{equation} \label{eq:sos-program-nonpoly}
  \left\{\begin{aligned}
    \text{find} \quad & h_1(\xx), h_2(\xx)\\
    \rm{s.t.} \quad 
    & l(\xx, w) = h_1(\xx)+w\cdot h_2(\xx),\\
    &\tilde{l}(x_0,\xx, w)=\sigma_{k,0}+\sum_{i=1}^{m_k+4} \sigma_{k,i} \tilde{f}_{k,i} \quad \text{for }k=1,\dots,K_{\phi},\\
    &-\tilde{l}(x_0,\xx, w)=\tau_{k',0}+\sum_{j=1}^{n_{k'}+4} \tau_{k',j} \tilde{g}_{k',j} \quad \text{for }k'=1,\dots,K_{\psi},\\
    &\sigma_{k,0},...,\sigma_{k,m+2}\in \mathrm{\Sigma}[x_0,\xx,\yy], \sigma_{k,m+3}, \sigma_{k,m+4}\in \RR[x_0,\xx,\yy], \\
    & \qquad  \text{for }k=1,\dots,K_{\phi},\\
    &\tau_{k',0},...,\tau_{k',n+2}\in \mathrm{\Sigma}[x_0,\xx,\zz], 
    \tau_{k',n+3}, \tau_{k',n+4}\in \RR[x_0,\xx,\yy],\\
    & \qquad \text{for }k=1,\dots,K_{\psi},
  \end{aligned}\right.
\end{equation}
where $\tilde{f}_{k,m+1}=\tilde{g}_{k',n+1}=x_0$,
$\tilde{f}_{k,m+2}=\tilde{g}_{k',n+2}=w$,
$\tilde{f}_{k,m+3} = x_0^2+\|\xx\|^2+w^2+\|\yy\|^2-1$, $\tilde{g}_{k',n+3}= x_0^2+\|\xx\|^2+w^2+\|\zz\|^2-1$,
$\tilde{f}_{k,m+4} = \tilde{g}_{k',n+4}=x_0^2+\|\xx\|^2-w^2$,
for $k=1,\dots,K_{\phi}$ and $k'=1,\dots,K_{\psi}$.

By \cref{thm:completeness}, if a feasible solution $(h_1, h_2)$ of \cref{eq:sos-program-nonpoly} is found, then $h(\xx)=h_1(\xx)+\sqrt{\|\xx\|^2+1}\cdot h_2(\xx)$ is a semialgebraic interpolant function for $\phi$ and $\psi$.
In practice, w.l.o.g., we can assume that $h_1$ and $h_2$ are of the same degree~$d$ and solve SDP relaxations of \cref{eq:sos-program-nonpoly}.
The soundness result is similar to that of \cref{eq:sos-program-general}, requiring that $h(\xx)=0$ is not attainable over $T_{\phi}$ and $T_{\psi}$.

\begin{example}\label{ex:4}
Let $\xx=(x,y)$, $\yy=\zz=\emptyset$. We define
\begin{align*}
    \phi(x,y) &= 8xy-(x^2-y^3)^2\ge 0 \wedge x^2+y^2-1\ge 0,\\
    \psi(x,y) &= -12.5xy-(x^2+y^2)^2\ge 0 \wedge x^2+y^2-1\ge 0.
\end{align*}
Let the degree of $h_1(\xx)$ and $h_2(\xx)$ to be $3$,  
a solution to the $2$-th relaxation of \cref{eq:sos-program-nonpoly} is found in 0.02 seconds: 
\begin{align*}
    h_1 &= - 0.04402209 - 0.00093184 y + 0.01446436 x + \dots - 0.03703461 x^3,\\
    h_2 &= 0.05644318 - 0.01305178 y + 0.02407258 x +\cdots + 0.23199837 x^2.
\end{align*}
As a comparison, solving \cref{eq:sos-program-general} fails to produce a polynomial interpolant function of degree $3$, but succeeds at degree $4$.
\end{example}

\vspace{-1mm}
\begin{example}\label{ex:1}
Let $\xx=(x,y,z)$, $\yy=\emptyset$, and $\zz=(r,R)$. We define
\vspace{-1mm}
\begin{align*}
    \phi(x,y,z) & \coloneqq~ 1 + 0.1z^4 - x^4 - y^4 \ge 0 \wedge 10z^4 - x^4 - y^4 \ge 0,\\
    \psi(x,y,z,r,R) & \coloneqq~ 4R^2(x^2+y^2) - (x^2+y^2+z^2+R^2-r^2)^2 \ge 0 \\
    & \qquad \wedge 6\ge R\ge 4 \wedge 1\ge r\ge 0.5,
\end{align*}
where $\exists r, \exists R.~\psi(x,y,z,r,R)$ describes the set of interior points of a 3-dimensional torus with unknown minor radius~$r\in [0.5,1]$ and major radius~$R\in [4,6]$. 
By solving \cref{eq:sos-program-general} and \cref{eq:sos-program-nonpoly}, we obtain a polynomial interpolant
\begin{equation*}
    h_p(x,y,z) = 1.0 - 0.35507338 x^2 - 0.35507338 y^2 + 0.45264895 z^2,
\end{equation*}
and a semialgebraic interpolant function with
\begin{align*}
    h_1(x,y,z) &= 0.98004189 - 0.26291972 x^2 - 0.26291978 y^2 + 0.417581644 z^2,\\
    h_2(x,y,z) &= 1.0 - 0.51670759 x^2 - 0.51670759 y^2 + 0.60569150 z^2.
\end{align*}
As a comparison, \cite{gan2020nonlinear} fails to produce an interpolant of degree less than $4$.    
\end{example}

\vspace{-2mm}

\section{ Conclusions and Future Work}\label{sec:con}

In this paper, we have addressed the problem of synthesizing Craig interpolants for two general polynomial formulas.
By combining the polynomial homogenization techniques with the approach from \cite{gan2020nonlinear}, we have presented a complete SOS characterization of semialgebraic (and polynomial) interpolants.
Compared with existing works, our approach removes the restrictions on the form of formulas and is applicable to any polynomial formulas, especially when variables have unbound domains. Moreover, sparsity of polynomial formulas can be exploited to improve the scalability of our approach \cite{huang2024sparse,magron2023sparse}.

Our Craig interpolation synthesis technique offers broad applicability in various verification tasks.
It can be used as a sub-procedure, for example, in CEGAR-based model checking for identifying counterexamples \cite{mcmillan05}, in bounded model checking for generating proofs \cite{komuravelli16fmsd},
in program verification for squeezing invariants \cite{LSXLSH2017}, and in SMT for reasoning about nonlinear arithmetic \cite{jovanovic21cav}. 
Compared with existing algorithms, our SDP-based algorithm is efficient and provides a relative completeness guarantee. 
However, the practical implementation is not a trivial undertaking, as it requires suitable strategies for storing numerical interpolants and taming numerical errors \cite{roux18fmsd}.
This remains an ongoing work of our research.


\bibliographystyle{splncs04}
\bibliography{refer}

\newpage
\appendix

\section{Omitted Proofs}
\label{app}

\subsubsection{Proof of \cref{sec3:lm1}}
\begin{proof}
Suppose $S_1=\{\xx\in\RR^{r} \mid p_1(\xx) \ge 0,\ldots,p_m (\xx) \ge 0\}$, $S_2=\{\xx\in\RR^{r} \mid q_1(\xx) \ge 0,\ldots,q_n (\xx) \ge 0\}$. Then, the intersection $S_1\cap S_2$ is the set
\begin{equation*}
  S=\{\xx\in\RR^{r} \mid p_1(\xx) \ge 0,\ldots,p_m (\xx) \ge 0, q_1(\xx) \ge 0,\ldots,q_n (\xx) \ge 0\}.
\end{equation*}
By the definitions in \cref{eq:tS} and \cref{eq:Sinf}, it is easy to see $S^{(\infty)}=S^{(\infty)}_1\cap S^{(\infty)}_2$ and $\tilde{S}=\tilde{S}_1\cap\tilde{S}_2$. 
From the conditions in this lemma, we have
$S=\emptyset$ and $S^{(\infty)}=\emptyset$.
Therefore, from \cref{prop:3} we have $\tilde{S}=\tilde{S}^h$ and from \cref{property:hom} we have $\tilde{S}^h=\emptyset$, which gives $\tilde{S}_1\cap\tilde{S}_2=\tilde{S}=\emptyset$.
\end{proof}
\vspace{-5mm}
\subsubsection{Proof of \cref{sec3:basic-semi}}
\begin{proof}
For each $i=1,\ldots,a$, we invoke the proof of \cref{sec3:prop2} by treating $S_i$ and $T_2$ respectively as $S_1$ and $S_2$, obtaining a polynomial $g_i\in\RR[x_0,\xx]$ such that
\begin{equation*}
    \forall (x_0,\xx)\in \tilde{S}_i.~ g_i(x_0,\xx)>0 \text{ and }\forall (x_0,\xx)\in \tilde{T}_2.~ -g_i(x_0,\xx)>0.
\end{equation*}
Since $\tilde{T}_2$ is compact, using arguments similar to those in \cite[Lem.~3]{gan2020nonlinear},
we can construct a new polynomial $g\in\RR[x_0, \xx]$ satisfying \cref{sec3:eq3} from $g_1,\ldots,g_a$.
\end{proof}
\vspace{-5mm}
\subsubsection{Proof of \cref{sec3:thm1}}
\begin{proof}
By \cref{sec3:basic-semi}, there exists a polynomial $g_j(x_0,\xx)$ such that
\begin{equation*}
    \forall (x_0,\xx)\in \tilde{T}_1.~ g_j(x_0,\xx)>0 \text{ and }\forall (x_0,\xx)\in \tilde{S}'_j.~ -g_j(x_0,\xx)>0,
\end{equation*}
for $j=1,\ldots,b$.
Since $\tilde{T}_1$ is compact, using arguments similar to those for \cite[Lem.~3]{gan2020nonlinear}, we can show that there exists a polynomial $g\in\RR[x_0, \xx]$ such that
\begin{equation*}
    \forall (x_0,\xx)\in \tilde{T}_1.~ g(x_0,\xx)>0 \text{ and }\forall (x_0,\xx)\in \tilde{T}_2.~ -g(x_0,\xx)>0.
\end{equation*}
Using similar arguments as in the proof of \cref{sec3:prop2}, we find that
the semialgebraic function $h(\xx)\coloneqq(\sqrt{\|\xx\|^2+1})^{\deg(g)}g(\frac{1}{\sqrt{\|\xx\|^2+1}},\frac{\xx}{\sqrt{\|\xx\|^2+1}})$ is of the desired form and satisfies \eqref{sec3:eq4}.
\end{proof}
\vspace{-5mm}
\subsubsection{Proof of \cref{thm:main}}
\begin{proof}
According to the Tarski-Seidenberg theorem \cite{bochnak13book}, the projection of a semialgebraic set is also a semialgebraic set. This indicates that $P_{\xx}(T_{\phi})$ and $P_{\xx}(T_{\psi})$ are both semialgebraic sets.
By treating $\cl(P_{\xx}(T_{\phi}))$ and $\cl(P_{\xx}(T_\psi))$ respectively as $T_1$ and $T_2$ and invoking \cref{sec3:thm1}, we obtain the desired result.
\end{proof}
\vspace{-5mm}
\subsubsection{Proof of \cref{thm:sos-formulation}}
\begin{proof}
Since $h(\xx)$ is an interpolant function, we have $h(\xx)>0$ over $S_{\phi}$ and $h(\xx)<0$ over $S_{\psi}$.
By \cref{sec2:lm}, we know $\tilde{h}(\xx)\ge0$ on $\tilde{S}_\phi$ and $-\tilde{h}(\xx)\ge0$ on $\tilde{S}_\psi$.
The desired conclusion then follows from the fact that both $\tilde{S}_\phi$ and $\tilde{S}_\psi$ are of the Archimedean form and \cref{thm:putinar}.
\end{proof}
\vspace{-5mm}
\subsubsection{Proof of \cref{thm:sound}}
\begin{proof}
    Since $\tilde{h}(x_0,\xx)$ satisfies \cref{eq:sos-formulation}, we have $\tilde{h}(x_0,\xx)\ge 0$ on $\tilde{S}_{\phi}$ and $\tilde{h}(x_0,\xx)\le 0$ on $\tilde{S}_\psi$.
    By \cref{property:hom}, we have $h(\xx)\ge 0$ on $S_{\phi}$ and $h(\xx)\le 0$ on $S_{\psi}$, where the assumption ensures that $=$ is not attainable.  
    Therefore, the theorem is proved.
\end{proof}
\vspace{-5mm}
\subsubsection{Proof of \cref{prop:motzkin}}
\begin{proof}
    Suppose on the contrary that $M(x_1,x_2)+1=\sum_{i} p_i^2$.
    Since $M(x_1,x_2)+1$ has degree $4$ in $x_1$ and in $x_2$, monomials with the power of $x_1$ or $x_2$ greater than or equal to $3$ cannot occur in $p_i$.
    Similarly, since $x_1^4x_2^4$, $x_1^2$, $x_2^2$ do not occur in $M(x_1,x_2)+1$, the monomials $x_1^2x_2^2$, $x_1$, $x_2$ cannot occur in $p_i$. 
    Therefore, each $p_i$ contains only a subset of the monomials $x_1^2x_2$, $x_1x_2^2$, $xy$ and $1$.
    However, this implies that the coefficient of $x_1^2x_2^2$ in $M(x_1,x_2)+1$ must be nonnegative, which is a contradiction.
\end{proof}
\vspace{-5mm}
\subsubsection{Proof of \cref{thm:completeness}}
\begin{proof}
    According to \cref{thm:main}, the existence of $h(\xx)$ is ensured by the first condition.
    So we have $h(\xx)>0$ over $S_{\phi}$, which implies that $l(\xx, w)>0$ over the semialgebraic set
    $S_w\coloneqq \{(\xx,\yy, w)\in \RR^{r_1+r_2+1}\mid \phi(\xx,\yy), w^2=\|\xx\|^2+1, w\ge 0\}$. 
    By \cref{sec2:lm}, we obtain $\tilde{l}(x_0,\xx, w)\ge 0$ over $\tilde{S}_w$.
    Since $\tilde{S}_w$ is of the Archimedean form, the first equation of \cref{eq:complete} is obtained by applying \cref{thm:putinar}.
    The proof of the second equation is similar.    
\end{proof}

\section{Portraits}
\label{app:fig}

\begin{figure}[h]
    \captionsetup{font={small}}
    \centering
    \begin{subfigure}[b]{0.30\textwidth}
        \centering
        \includegraphics[width=\textwidth]{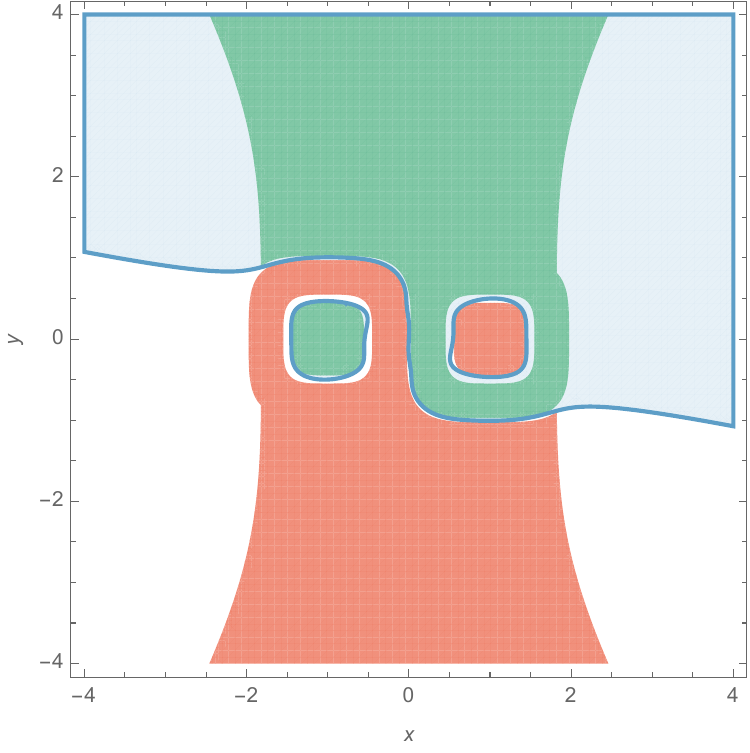}
        \caption{\cref{ex:2}}
    \end{subfigure}
    \hfill
    \begin{subfigure}[b]{0.30\textwidth}
        \centering
        \includegraphics[width=\textwidth]{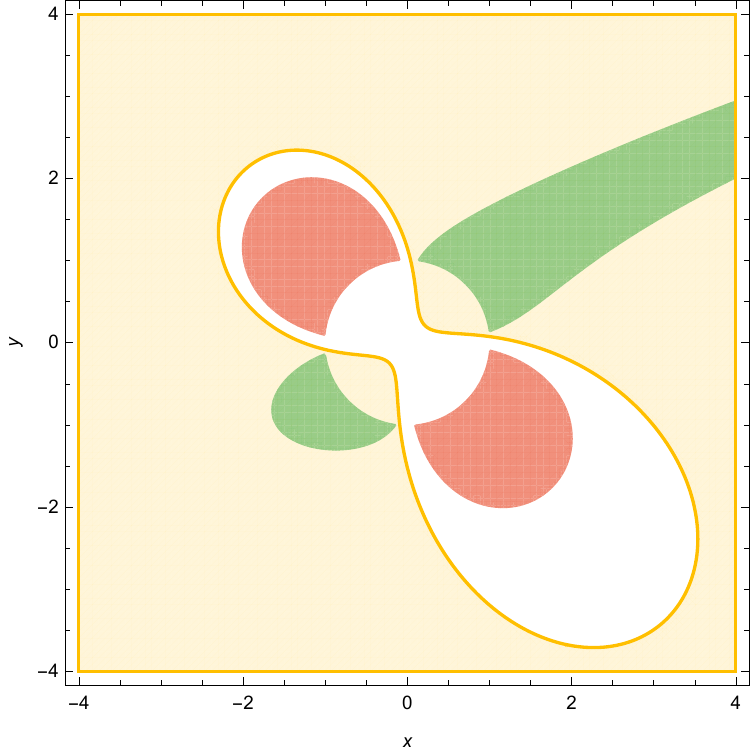}
        \caption{\cref{ex:4}}
    \end{subfigure}
    \hfill
    \begin{subfigure}[b]{0.30\textwidth}
        \centering
        \includegraphics[width=\textwidth]{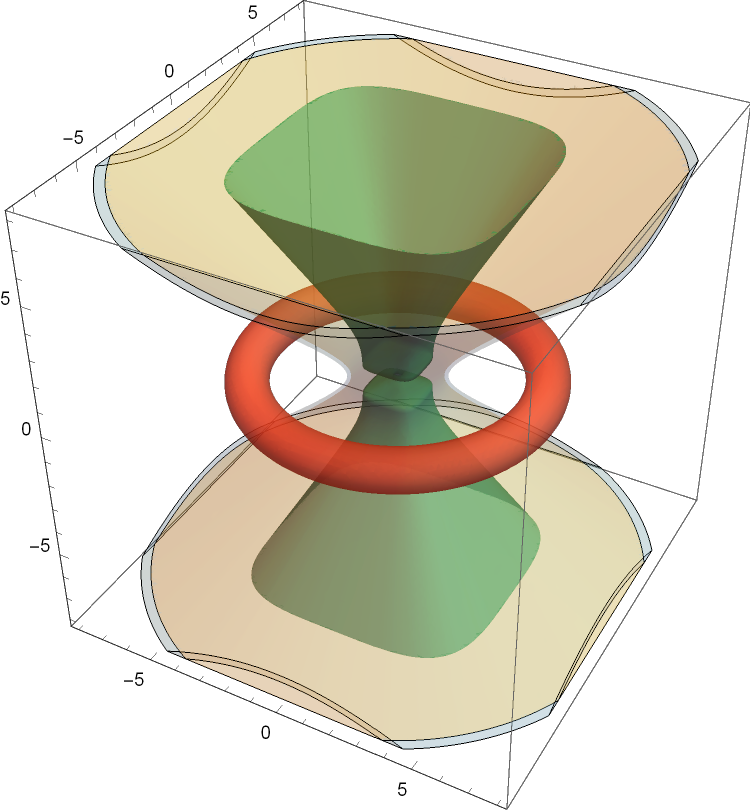}
        \caption{\cref{ex:1}}
    \end{subfigure}
    \\
    \scriptsize{
    green region: projection of $\phi$;
    red region: projection of $\psi$;
    light blue/yellow region: polynomial/semialgebraic interpolant. For \cref{ex:1}, $\psi$ is plotted for $r=0.75$ and $R=5$.
    }
    \caption{Portraits of Examples.}
\end{figure}

\end{document}